\newtheorem{theorem}{Theorem}
\newtheorem{definition}{Definition}
\newtheorem{lemma}{Lemma}
\newtheorem{observation}{Observation}
\begin{document}
%\linenumbers
\begin{frontmatter}
\title{An Efficient Heuristic for Betweenness-Ordering\tnoteref{t1}}
\tnotetext[t1]{This is an expanded and extended version of the results appeared in CompleNet 2015 \cite{Agarwal:2015}.}
\author[rrs]{Rishi Ranjan Singh \corref{cor1} \fnref{cor2}}
\ead{rishirs@iitrpr.ac.in}
%\author[rrs]{S.R.S. Iyengar}%\fnref{dg1} }
%\ead{sudarshan@iitrpr.ac.in}
\author[iitr]{Shubham Chaudhary}%\fnref{dg1} }
\ead{shubhuma@iitr.ac.in}
\author[iitr]{Manas Agarwal}%\fnref{dg1} }
\ead{manasuma@iitr.ac.in}
\cortext[cor1]{Corresponding author}
\fntext[cor2]{Present Address: Department of Information Technology, IIIT Allahabad, India}

\address[rrs]{Department of Computer Science and Engineering, Indian Institute of Technology Ropar, \\ Nangal Road,  Rupnagar, Punjab - 140001, India}
\address[iitr]{Department of Mathematics, Indian Institute of Technology Roorkee\\
Roorkee, Uttarakhand, India}

\begin{abstract}
%\begin{linenumbers}
%In conjunction with the big data problems there came the need to analyze big networks and in this connection, centrality measures became of great interest to the community of mathematicians, computer scientists, and physicists. While it is an essential question to ask how one can rank vertices based on their importance in a network, there hasn't been a commonly accepted definition, mainly due to the subjectivity of the term ``importance". 
Centrality measures, erstwhile popular amongst the sociologists and psychologists, have seen broad and increasing applications across several disciplines of late. Amongst a plethora of application specific definitions available in the literature to rank the vertices, closeness centrality, betweenness centrality and eigenvector centrality (page-rank) have been the most important and widely applied ones. Networks where information, signal or commodities are flowing on the edges, surrounds us. Betweenness centrality comes as a handy tool to analyze such systems, but betweenness computation is a daunting task in large size networks. In this paper, we propose an efficient heuristic to determine the betweenness-ordering of $k$ vertices (where $k$ is very less than the total number of vertices) without computing their exact betweenness indices. The algorithm is based on a non-uniform node sampling model which is developed based on the analysis of Erdos-Renyi graphs. We apply our approach to find the betweenness-ordering of vertices in several synthetic and real-world graphs. The proposed heuristic results very efficient ordering even when runs for a linear time in the terms of the number of edges. We compare our method with the available techniques in the literature and show that our method produces more efficient ordering than the currently known methods. 
% Extensive experimental evidence is presented to demonstrate the performance of the proposed heuristic.
%We further show that the accuracy of our algorithm gets better with the increase in size and density of the network.
%\end{linenumbers}
\end{abstract}
\begin{keyword}
Centrality-Ordering, Betweenness Centrality, Betweenness-Ordering, Heuristic, Non-uniform sampling. 
\end{keyword}
\end{frontmatter}
\pagebreak
\section{Introduction} 

The centrality of a vertex in a network is the quantification of the intuitive notion of the importance of a node in a network. In recent times, centrality measures have been extensively used in the analysis of large-scale real-world networks. Centrality indices, also referred as structural indices, are real-valued functions that remain invariant under isomorphic transformation of graphs \cite{Brandes:2005}. A number of application-specific centrality measures have been coined in the literature. For a detailed study of centrality indices and their applications, one can refer to the books by Newman~\cite{Newman}, Jackson~\cite{Jackson} and Brandes and Erlebach~\cite{Brandes:2005}. Real-world networks are usually gigantic, dynamic in nature and keep changing at a very high rate. In such networks, comparing centrality scores of two nodes is of great importance. Consider, for example, a production company is finalizing a new brand ambassador for their organization and has two options to choose. To evaluate which one is better, one might need to compare the importance (in this case popularity) of the two candidate actors in a given social network where the number of nodes is in the order of thousands. Consider two research papers in the citation network \cite{tang:2008} which is in the order of a few million nodes, how can one find which article is more central than the other one? For example, if one were to compute betweenness centrality in this case, even with the adoption of the best-known algorithm, it is a time-consuming task for large-sized networks. We ask this question \emph{ Is there a method to compute the centrality-ordering of two nodes and declare which one is more central than the other, without actually computing their exact centrality values}. More formally, given two nodes $u$ and $v$ in a graph $G$ with centrality values $C(u)$ and $C(v)$, if $C(u)>C(v)$ then $u$ is superior in rank to $v$. Can one get to know which node is of superior rank over the other without computing its centrality values? We call this problem the \emph{centrality-ordering} problem. Note that in this problem, nodes are arbitrarily given, and that is why this class of problems is different than the problem of finding top $k$ most central nodes \cite {Lim:2011,le:2014}.

In general, there are three possible ways to solve the centrality-ordering problem if we allow exact computation of centrality scores:
\begin{enumerate}
\item Compute the exact centrality scores of both the nodes and order them accordingly.
\item Efficiently approximate the centrality values of both the nodes and compare the scores to get the estimated ordering.
\item Directly compute the exact or approximate centrality-ordering exploiting some structural property of the given network without even calculating the individual centrality scores.
\end{enumerate}
The reason, we exclude the first type of solution for ordering problem is summarized below. 
This trivial method for exact centrality-ordering calculates the centrality score of both the nodes and then compares the values to answer which one is more important. There are two reasons why the current state of the art algorithms for exact calculation of the centrality measures are not time efficient. Firstly because of the large size and the dynamic nature of networks. In big dynamic networks, we have to recompute the centrality scores each time the network changes, which is evidently expensive. Secondly because of the global characteristics of some centrality measures. For example, closeness centrality and degree centrality computation of a single node takes very less time as compared to the calculation of the corresponding centrality measures for all the nodes in a network. But unlike degree and closeness centralities, computing betweenness centrality of a node is conjectured to be as expensive as computing it for all the nodes in any network \cite{Kintali:2008}. 

The second method efficiently estimates the individual centrality scores and tries to find the correct ordering quickly with a high probability. In the third and the last type of solution, we order the nodes based on some structural properties of the network, without computing or estimating the centrality scores. Such a kind of solution for a particular case of eccentricity-ordering is available in the appendix.

Networks with information, signal or commodity flowing on its edges are present everywhere in nature. Betweenness centrality is a popular tool to analyze such systems. Betweenness centrality was proposed by Freeman~\cite{Free77} and Anthonisse~\cite{Antho71} independently. Betweenness centrality of a node $v$ is defined as the relative fraction of shortest paths passing through $v$. It is calculated as $BC(v)=\sum\limits_{s\neq t\neq v \epsilon V}\frac {\sigma_{st}(v)}{\sigma_{st}}$, where $\sigma_{st}$ is the total number of shortest paths from vertex $s$ to vertex $t$ and $\sigma_{st}(v)$ is the total number of shortest paths from vertex $s$ to vertex $t$ passing through vertex $v$. Unlike degree centrality, betweenness centrality covers more global characteristics and unlike closeness centrality it works even on disconnected networks. Betweenness centrality has found many important applications in diverse fields. It has been used in biological networks \cite{Shiva05}, protein-protein interaction (PPI) networks \cite{Joy05}, analyzing communication system networks \cite{Tizghadam:2010}, identifying critical nodes in the electrical \& electronic systems (EES systems like Electronic Control Units used in vehicles) \cite{TCL}, analyzing supply chain networks \cite{Borgatti:2009}, identifying bottleneck in supply chain networks \cite{Mizgier:2012}, planning a better public transit system networks; for example metro networks \cite{Derrible:2012}, measuring load at a node in gas pipeline network \cite{Carvalho:2009}, waste-water disposal system networks etc. 

Since computing the betweenness centrality of one node is equivalent to computing the betweenness centrality of all nodes according to the currently known deterministic algorithms, we are motivated to address the problem of betweenness-ordering of two vertices. First, we give a heuristic to find the betweenness-ordering of two vertices (here onwards called the betweenness-ordering-problem). Then, we extend the algorithm for comparing $k$ ( $2<k\ll n$ ) vertices, where $n$ is the total number of nodes. In this paper, we propose a second type of solution for betweenness-ordering problem based on a non-uniform sampling based efficient estimation. We present expanded version of our estimation heuristic\cite{Agarwal:2015} and then discuss the betweenness-ordering results based on it. The algorithm uses a novel non-uniform sampling technique which approximates the optimal sampling model noted by Chehreghani~\cite{Chehreghani:2014} better than the non-uniform sampling model proposed by him. The betweenness score of a given node is estimated using the proposed sampling model incorporated within the approximation algorithm, given in section 3.

The contributions of this paper are as follows.
\begin{enumerate}
\item To the best of our knowledge, it is the first study that focuses on the ordering of nodes based on centrality scores. Several studies exist that target to find the top central nodes in a graph, but in our knowledge, no study explores the problem of finding the ranking two or more (randomly picked) nodes based on a particular centrality measure without computing their exact centrality scores.
\item We devise an efficient heuristic for betweenness-ordering of two nodes and further generalize it for ordering an arbitrary $k$ nodes ($k\ll$ total number of nodes). First, we discuss a very efficient non-uniform sampling technique to choose the source nodes (also called pivot nodes) for single source shortest path computation. Then we use the model to estimate the betweenness score of a given node efficiently without computing betweenness of all the nodes in the graph. 
\item The developed non-uniform node sampling model provides a better approximation of the optimal sampling model given by Chehreghani~\cite{Chehreghani:2014} than the non-uniform sampling model implemented by him.
\item We conduct extensive simulations with the real-world and synthetic networks. The results show that our heuristic and the proposed sampling model outperforms most of the sampling based and deterministic approximation algorithms for ordering the nodes based on betweenness score or estimating a node's betweenness.
\item Although the basis of the proposed heuristic is an analysis of random $G(n,p)$ graphs which are different than the real-world graphs, the reliability of the heuristic is made evident from the efficient performance on a wide range of several real-world networks picked from SNAP dataset \cite{snap}.         
\end{enumerate}  

We organize the rest of the paper as follows. In next section, we briefly discuss the algorithms employed for betweenness centrality computation. In section 3, we define basic terms used in the paper and explain the previous concepts, based on which we develop our sampling model. In section 4, we develop our model based on the analysis of random networks and some observations. Betweenness-ordering heuristic is discussed in section 5. All the details about simulations, data sets used in simulations, performance tools used for evaluation and comprehensive results in the form of plots and tables are compiled in section 6. We discuss the possible future directions of work and conclude the paper in section 7.   

\section{Related work} 

The ordering of nodes in a network based on betweenness centrality can find several applications in various real-world scenarios. To the best of our knowledge, it is the first work which considers and motivates the study on the betweenness-ordering of two nodes. Most of the studies done so far consider either computation of betweenness scores or ranking all the nodes based on their betweenness score. We summarize few of such algorithms in brief. Most of the exact algorithms for betweenness calculation are based on either single source shortest path (SSSP) computation algorithms from all sources or all pair shortest path computation algorithms. The most trivial algorithm is a modified version of the Floyd-Warshall's APSP algorithm \cite{Floyd,Warshall} to compute the betweenness scores for all nodes \cite{Free77}, but this takes O($n^3$) time where $n$ is the number of nodes. In the year 2001, Brandes \cite{Brandes:2001} introduced an algorithm based on Dijkstra' algorithm \cite{Dijkstra} which computes the exact betweenness score of all nodes in unweighted graphs in O($mn$) time, where $m$ is the number of edges.   

Even the state of the art (Brandes') algorithm is expensive in terms of time for large-sized real-world networks. This drawback motivated the researchers to develop faster exact or approximation algorithms. Several exact algorithms for large graphs (Sariy{\"u}ce et al.~\cite{Sariyüce:2013}) and dynamic graphs (Lee et al.~\cite{QUBE:2016}, Green et al.~\cite{Green12}, Kas et al.~\cite{Kas13}, Goel et al.~\cite{Goel:2013}, Nasre et al.~\cite{Nasre}) have been developed. These algorithms improved the computation time experimentally on a special type of graphs, but in the worst case, they all were as expensive as Brandes'~cite {Brandes:2001}. Several approximation algorithms were also proposed. These algorithms ran much faster and computed centrality scores close to the exact centrality scores. Two class of approximation algorithms exist. The First type consists of algorithms that focus on estimating the betweenness score of all the nodes together. The second type comprises of algorithms that approximate the betweenness score of a given node. Another class of algorithms is proposed \cite{Bergamini:2014,Bergamini:2015} that attempt to approximate betweenness centrality in dynamic graphs. A divide and conquer algorithm is given recently by Erdos et al.~\cite{erdos:2014} which computes betweenness centrality of nodes considering only the shortest paths between a set of target nodes. Our goal is to develop an efficient estimation algorithm to approximate betweenness score of a node. Therefore, we summarize most of the approximation ideas developed so far for betweenness computation.

Eppstein and Wang~\cite{Eppstein} first proposed the concept of sampling to compute approximately the centrality indices for which SSSP computation is required from all the nodes. They suggested to compute SSSP from only a few nodes (called pivot) and discussed how to approximate the closeness centrality. Brandes and Pich~\cite{Brandes:2007} extended the idea of sampling given by Eppstein and Wang for approximating betweenness centrality. They gave different pivot selection strategies. SSSP from each pivot node were computed to estimate the contribution of each pivot node in the betweenness score of all nodes. By extrapolating the average contribution from pivot nodes, the betweenness centrality was estimated. 

Bader et al.~\cite{Bader:2007} proposed an adaptive sampling based approximation algorithm to calculate the betweenness score of a given node. In his study, uniform probabilities were considered to sample the nodes. The number of sampled nodes were dependent on the importance of the considered node, i.e., for highly central nodes, the algorithm requires to sample less number of nodes as compared to the nodes sampled for less central nodes. They also provided a theoretical bound for their approximation algorithm. Geisberger et al.~\cite{Geisberger:2008} generalized the approach given by Brandes and Pich~\cite{Brandes:2007} and observed that the betweenness centrality scores of unimportant (less betweenness central) nodes which are near to pivot nodes, get overestimated. They provided an unbiased betweenness estimator framework which overcomes the observed problem. Gkorou et al.~\cite{Gkorou} developed two approximation approaches to estimate betweenness. Their first approach was for the dynamic networks and was based on the observation that very highly central nodes remain almost invariant over dynamic operations. The second algorithm was for large networks and considered only $k$-length shortest paths for the computation of approximate betweenness score. 
%Their algorithms did not perform well on random graphs.
Riondato and Kornaropoulos~\cite{Riondato} recently developed two randomized algorithms to approximate betweenness score based on a sampling of shortest paths and analyzed theoretically. The first algorithm approximates the betweenness score for all the nodes and the second algorithm approximates the betweenness score for top-$k$ nodes.
% They provided theoretical bounds using VC-Dimension theory for both algorithms. 

Recently, Chehreghani~\cite{Chehreghani:2014} proposed a new idea of approximating the betweenness score of a given node. He used non-uniform sampling and then unlike \cite{Brandes:2007, Geisberger:2008}, he scaled the contributions from sampled nodes with respect to the probabilities. Finally, he averaged all the scaled values to achieve the approximate score. He used a very trivial model for generating the non-uniform probabilities without giving any theoretical derivation.
 
\section{Preliminary}
In this section, we introduce some basic terms related to the betweenness centrality which has been used throughout the paper. We also discuss a recent concept that gives motivation for our sampling technique.
\subsection{Terminology}
We use following terms interchangeably; node or vertex and graph or network. For simplicity, we consider only unweighted undirected graphs. All the concepts discussed in this paper can be easily extended for weighted or directed graphs. Given a graph $G=(V,E)$, $V$ is the set of nodes with $|V|=n$ and $E$ is the set of edges with $|E|=m$. A (simple) $path$ is a sequence of edges connecting a sequence of vertices without any repetition of vertices. The $length$ of a path is the number of edges in the path. \textit{Shortest paths} between two vertices are the smallest length paths between them. \textit{Distance} between two nodes $i$ and $j$, $d(i,j)$, is the length of shortest path between $i$ and $j$.

Let $\sigma_{st}$ be the number of shortest paths between $s$ and $t$, for $s,t\in V$. Let $\sigma_{st}(v)$ be the number of shortest paths between $s$ and $t$ passing through $v$, for $v\in V$. Betweenness centrality score of a node $v\in V$ is calculated as \[BC(v)=\sum\limits_{s\neq t\neq v \epsilon V}\frac {\sigma_{st}(v)}{\sigma_{st}}.\] $Pair \;dependency$ of a pair of vertices $(s,t)$ on a vertex $v$ is defined as: $\delta _{st}(v) = \frac{\sigma_{st}(v)}{\sigma_{st}}$. Betweenness centrality of a vertex $v$ can be defined in terms of pair
dependency as \[BC(v)=\sum\limits_{s\neq v\neq t \in V}\delta _{st}(v).\] Let $BFT_r$ denotes the breadth-first traversal (BFT) of the graph rooted on vertex $r$. In $BFT_r$, we assume that $r$ is at level $0$ and the next levels are labelled by natural numbers in an increasing order.
 \textit{Dependency} of a vertex $s$ on a vertex $v$ is defined as: $\delta_{s\bullet}(v)=\sum\limits_{t \in V\setminus \{s,v\}}\delta _{st}(v)$.
Let us define a set $P^s(w)$ = $\{v:\; v\in V,\; w \; is \; a \; successor\; of \; v\; in \: BFT_s\}$. Brandes~\cite{Brandes:2001} proved that:
\begin{equation}
\delta_{s\bullet}(v)=\sum\limits_{w: v \in P^s(w)} \frac{\sigma_{sv}}
{\sigma_{sw}}(1 + \delta_{s\bullet}(w)).
\label{eq3}
\end{equation}
          
\subsection{A Betweenness Estimation Technique Based on Non-uniform Sampling \label{DBM}}
In this section, we briefly describe the recent work of Chehreghani~\cite{Chehreghani:2014}. An improvement on this work is given by us in this paper in section 4. Chehreghani gave an approximation algorithm to compute betweenness score of a given node $v$. The algorithm is summarized as Algorithm~\ref{BO_algo1}. For a given node $v$, the algorithm takes the sampling probabilities as input and outputs the approximate betweenness score of node $v$. Step 2 initializes the betweenness score to 0. The algorithm estimates the betweenness score of a node $v$, a $T$ number of times and takes the average of all the $T$ estimations. In each iteration of the algorithm, it samples a pivot node and computes the dependency of the pivot node on node $v$ using a single iteration of Brandes' algorithm \cite{Brandes:2001}. Then it estimates the betweenness score of node $v$ by, dividing (scaling) the computed dependency by the sampling probability of that pivot node. He has motivated his paper with the idea of optimal sampling that is stated in the following theorem.
\begin{algorithm}[t]$Estimate (G,P,v,T)$
\begin{algorithmic}[1]
\STATE \textbf{Input.} Graph $G$, probabilities $P=\{p_1,p_2, \cdots, p_n\}$, node $v$, number of samples $T$.
\STATE $BC(v)=0$.
\FOR {i=1 to T}
\STATE Select a node $i$ with probability $p_i$.
\STATE Compute $\delta_{i\bullet}(v)$ in the $BFT_{i}$ using Equation~\ref{eq3}.
\STATE $BC(v) \leftarrow BC(v)+\frac{\delta_{i\bullet}(v)}{p_i}$.
\ENDFOR
\STATE $BC[v]\leftarrow BC(v)/T.$
\STATE \textbf{Return.} $BC(v)$.
\end{algorithmic}
\caption{: Estimation algorithm to compute betweenness score of a given node $v$ \cite{Chehreghani:2014}. }
\label{BO_algo1}
\end{algorithm}
\begin{theorem}\cite{Chehreghani:2014} Let the sampling probability assigned to each node $i$ be    

\[
p_i = \dfrac{\delta_{i\bullet}(v)}{\sum_{j=1}^{n}\delta_{j\bullet}(v)}
\]
then, betweenness score of node $v$ can be exactly calculated in $O(m)$ time using single iteration of Algorithm~\ref{BO_algo1}.
\label{thrm01}
\end{theorem}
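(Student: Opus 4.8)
The plan is to show that, under the stated distribution, running Algorithm~1 with a single iteration ($T=1$) returns a value that is deterministically equal to $BC(v)$, and that producing this value costs only $O(m)$ time. The argument is essentially a cancellation combined with the identity that summing the Brandes dependencies over all sources recovers the betweenness score.

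First I would unwind what Algorithm~1 outputs when $T=1$. The loop runs exactly once: it draws a node $i$ with probability $p_i$, computes $\delta_{i\bullet}(v)$ from a single breadth-first traversal $BFT_i$ using the Brandes recursion in equation~(\ref{eq3}), and sets $BC(v)\leftarrow \delta_{i\bullet}(v)/p_i$. Since the concluding division by $T=1$ changes nothing, the returned value is exactly $\delta_{i\bullet}(v)/p_i$ for whichever pivot $i$ was sampled. Substituting the prescribed probability $p_i=\delta_{i\bullet}(v)/\sum_{j=1}^{n}\delta_{j\bullet}(v)$, the factor $\delta_{i\bullet}(v)$ cancels and gives $\delta_{i\bullet}(v)/p_i=\sum_{j=1}^{n}\delta_{j\bullet}(v)$. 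The key point is that this quantity no longer depends on the sampled index $i$, so the estimate is the same no matter which pivot is chosen; the estimator has zero variance.

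It then remains to identify $\sum_{j=1}^{n}\delta_{j\bullet}(v)$ with $BC(v)$. I would expand the betweenness definition by grouping the pair dependencies according to their source vertex: $BC(v)=\sum_{s\neq v\neq t}\delta_{st}(v)=\sum_{s\in V}\delta_{s\bullet}(v)$, using $\delta_{s\bullet}(v)=\sum_{t}\delta_{st}(v)$ and the fact that $\delta_{v\bullet}(v)=0$ (a source is never counted as an internal vertex of its own shortest paths), so the term $j=v$ contributes nothing. Hence the output is exactly $BC(v)$. For the time bound, one iteration performs a single $BFT_i$ together with the back-propagation of equation~(\ref{eq3}); each edge is processed a constant number of times in both phases, so a single iteration runs in $O(m)$ time.

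The step deserving the most care is conceptual rather than computational: the cancellation is what turns a random estimator into a deterministic one, which is the entire content of calling this sampling \emph{optimal}. I would stress that the result is existential, not constructive. The probabilities $p_i$ are themselves defined through the very dependencies $\delta_{i\bullet}(v)$ whose sum equals $BC(v)$, so one cannot assemble this ideal distribution without already knowing the answer. This tension is exactly what motivates the remainder of the paper, where these unknown optimal probabilities are approximated rather than computed.
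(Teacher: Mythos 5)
Your proof is correct. Note that the paper itself never proves this statement --- Theorem 1 is imported verbatim from Chehreghani \cite{Chehreghani:2014} as motivation for the sampling model --- so your argument supplies exactly what the paper delegates to the citation: with $T=1$ the output is $\delta_{i\bullet}(v)/p_i$, the prescribed probabilities make this equal $\sum_{j}\delta_{j\bullet}(v)$ independently of which pivot was drawn, grouping the pair dependencies by source (with $\delta_{v\bullet}(v)=0$) gives $\sum_{j}\delta_{j\bullet}(v)=BC(v)$, and a single breadth-first traversal with Brandes' back-propagation costs $O(m)$. The only clause worth adding is that the cancellation $\delta_{i\bullet}(v)/p_i=\sum_{j}\delta_{j\bullet}(v)$ presupposes $\delta_{i\bullet}(v)\neq 0$; this is harmless because nodes with zero dependency receive probability zero and are never sampled, though in the degenerate case $BC(v)=0$ the distribution itself is undefined and the claim holds vacuously only if one adopts, say, uniform sampling there.
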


We refer the probability defined in Theorem~\ref{thrm01} as \textit{optimal probability} and call a model \textit{optimal model (OPT)} if it can generate optimal probabilities. Calculating optimal probabilities are as expensive as computing exact betweenness using Brandes' algorithm \cite{Brandes:2001}. Thus, a model was desired that can efficiently estimate sampling probabilities close to the optimal. Chehreghani noted that any such model should satisfy at least the following relation for most of the vertex pairs $(i, j)$:

\begin{equation}
p_{i}<p_{j} \Longleftrightarrow \delta_{i\bullet}(v)< \delta_{j\bullet}(v)
\end{equation}
Chehreghani has given a simple distance based model (DBM)\cite{Chehreghani:2014} to generate the sampling probabilities. He proposed to take the probabilities as the normalized value of the inverse of distance from node $v$ to node $i$,
$p_i \propto \frac{1}{d(v,i)}.$
He has shown experimentally that his non-uniform sampling technique reduces the error in the computation of betweenness score as compared to uniform sampling technique due to Brandes and Pich~\cite{Brandes:2007} and Bader et al.~\cite{Bader:2007}. But, Chehreghani was unable to provide a theoretical derivation for DBM. In DBM, many of the nodes $j$ with $\delta_{j\bullet}(v)= 0$ get same probabilities as nodes $i$ with $\delta_{i\bullet}(v)\neq 0$ because of being at the same level in $BFT_{v}$. We next propose a new probability estimation model for nodes that efficiently approximates the optimal probabilities and outperforms DBM. 
 
\section{ A New Non-uniform Sampling Model}

We consider the problem of time-efficiently ordering of two nodes based on their betweenness centrality. To order the two nodes based on their betweenness scores, we first efficiently estimate their individual betweenness scores and then compare. Our main problem reduces to a sub-problem which requires computing a very efficient approximation of the betweenness score of a given node. In this section, we discuss a model which generates non-uniform probabilities for sampling the nodes. This model can be incorporated with Algorithm~\ref{BO_algo1} to solve the above sub-problem. Our model is based on the inverse of degree and an exponential function in the power of distance. Thus we refer it as EDDBM (Exponential in Distance and inverse of Degree Based Model). The developed model reduces the average error by generating probabilities very close to the optimal probabilities. We try assigning larger probability values to the vertices contributing more to the betweenness of a given node $v$ and smaller to those which contribute less. We analyze random $G(n,p)$ graphs to establish the relation between the node sampling probabilities and the distance between the considered node and node to be sampled. Then, based on few observations, we propose a relation between the sampling probability and the degree of the nodes to be sampled. Finally, the steps to generate the probabilities by our model is described. 

The main reason to pick random $G(n,p)$ graphs for analysis is its non-complex characteristic. Analysis of other categories of synthetic graphs, for example, scale-free graphs and small world graphs is hard due to their complex nature. Though, a model developed based on the random graphs may not be widely applicable on real-world graphs, we prove the worthiness of our model by testing it on a large set of real-world graphs.  
\subsection{Analysis of Random Graphs}
\par\noindent
Let $G$ be a random graph that is generated based on the $G(n,p)$ model given by Erdos and Renyi~\cite{ER}. We are given a vertex $v$ to compute its betweenness score. We first analyze how the dependency of a node $i$ on the node $v$, $\delta_{i\bullet}(v)$ varies when $v$ lies on different levels in $BFT_i$. This will help us to establish a relation between $\delta_{i\bullet}(v)$ and the distance between $i$ and $v$. For this, first, we need to compute the expected number of nodes at any level $m$ of a BFS traversal. Wang~\cite{Wang} gave a complex approach to estimate the number of nodes at any level in BFS traversal on various types of graphs based on generating functions and degree distribution. We discuss a simple approach to estimate the number of nodes at a level in BFS traversal in random $G(n,p)$ graphs. Let $\lambda$ be the average degree of the given graph and let $p$ be the probability of an edge's existence. The first lemma approximately estimates the number of nodes at a given level in a BFS traversal by a recurrence relation.\\

\begin{lemma} Let $\alpha_{j}$ be the number of nodes at level $j$ in the $BFS_i$. Then the number of nodes at level $m+1$, $\alpha_{m+1}$ can be given as: 
\begin{equation}
\alpha_{m+1}\approx np(1-\dfrac{\sum_{j=0}^{m} \alpha_{j}}{n}) \alpha_{m}.
\label{eq1}
\end{equation}
\label{l1}
\end{lemma}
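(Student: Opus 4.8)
The plan is to compute the expected size of level $m+1$ directly, by identifying exactly which vertices land there and exploiting the independence of edges in the $G(n,p)$ model. A vertex $u$ sits at level $m+1$ of $BFS_i$ precisely when two conditions hold: $u$ has not appeared in any of the levels $0,1,\dots,m$, and $u$ is adjacent to at least one vertex sitting at level $m$ (the current frontier). So first I would count the undiscovered vertices, of which there are $n-\sum_{j=0}^{m}\alpha_j$, and then, for each such vertex, estimate the probability that it attaches to the frontier.

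The crucial structural observation is the principle of deferred decisions: while running BFS we only ever inspect edges incident to already-discovered vertices, so at the moment we expand level $m$ the status of every edge running from a frontier vertex to an as-yet-undiscovered vertex is still ``fresh'' and, by the $G(n,p)$ model, present independently with probability $p$. Hence, for a fixed undiscovered vertex $u$, the chance that $u$ misses all $\alpha_m$ frontier vertices is $(1-p)^{\alpha_m}$, so the chance that $u$ is pulled into level $m+1$ is $1-(1-p)^{\alpha_m}$. Summing over all undiscovered vertices by linearity of expectation, while conditioning on the level sizes $\alpha_0,\dots,\alpha_m$ being as given, yields
$$\alpha_{m+1}\;\approx\;\Bigl(n-\sum_{j=0}^{m}\alpha_j\Bigr)\bigl(1-(1-p)^{\alpha_m}\bigr).$$

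The last step is to linearize the attachment probability. Since the networks modeled by $G(n,p)$ here are sparse, $p$ is small and $p\alpha_m\ll 1$, so the first-order binomial (equivalently Taylor) expansion gives $1-(1-p)^{\alpha_m}\approx p\alpha_m$. Substituting this and factoring out $n$ produces
$$\alpha_{m+1}\;\approx\;np\Bigl(1-\frac{\sum_{j=0}^{m}\alpha_j}{n}\Bigr)\alpha_m,$$
which is exactly the claimed recurrence~(\ref{eq1}).

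I expect the main obstacle to be rigor rather than mechanics, since the two approximation signs hide genuine assumptions. First, treating each $\alpha_j$ as a deterministic quantity and pushing the expectation through is a mean-field heuristic that is only valid when the level sizes concentrate around their means. Second, the linearization $1-(1-p)^{\alpha_m}\approx p\alpha_m$ silently discards the $\binom{\alpha_m}{2}p^2$ and higher-order terms, which is legitimate only in the sparse regime where $p\alpha_m$ stays small; if the frontier ever grows comparable to $1/p$ the estimate degrades. I would flag both places as where the ``$\approx$'' genuinely does its work, since the remainder of the argument is an exact expectation computation.
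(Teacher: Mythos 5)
Your proposal is correct, and it lands on exactly the paper's key intermediate identity, $\alpha_{m+1} = \bigl(n-\sum_{j=0}^{m}\alpha_j\bigr)\bigl[1-(1-p)^{\alpha_m}\bigr]$, before the same final linearization --- but it gets there by a genuinely different counting argument. The paper follows Van Der Hofstad's exploration-process view: it expands the frontier one vertex at a time, models the $t$-th exploration's discoveries as $X_t \sim Bin\bigl(n-(t-1)-S_{t-1},\,p\bigr)$, takes expectations step by step (first explorer finds $Np$ new vertices, the next finds $N(1-p)p$, and so on, with $N = n-\sum_{j=0}^m \alpha_j$), and sums the resulting geometric series over the $\alpha_m$ explorations of the frontier. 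You instead fix a single undiscovered vertex, invoke deferred decisions to argue its edges to the frontier are still independent Bernoulli$(p)$, compute the complementary probability $1-(1-p)^{\alpha_m}$ that it attaches to at least one frontier vertex, and sum over undiscovered vertices by linearity of expectation. These are dual decompositions --- the paper iterates over the explorers, you iterate over the candidates --- and they agree exactly. Your route is shorter and avoids the sequential recursion entirely; the paper's route connects to the standard exploration/branching-process machinery for $G(n,p)$ and is what the authors use to set up the per-step binomial variables. A further point in your favor: you state the real conditions under which the two ``$\approx$'' signs are legitimate (concentration of the level sizes, and $p\,\alpha_m \ll 1$ rather than merely $p \ll 1$), which the paper's proof glosses over by simply assuming sparsity and discarding higher-order terms in $p$.
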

\begin{proof} Van Der Hofstad~\cite{Hofstad} explained the BFS traversal as Exploration Technique (ET) in random graphs. In this technique, all the vertices are initially inactive except $i$ (the root node on which ET has to be applied). The vertex $w$ is chosen which was discovered first among the current active vertices, and its neighborhood is explored for inactive vertices. All the inactive vertices found are marked active. Node $w$ is made inactive and is labeled as processed. In the paper we refer exploring the neighbourhood of $t^{th}$ vertex as $t^{th}$ exploration.

The following variables are same as in the exploration technique due to Van Der Hofstad~\cite{Hofstad}. Let $S_{t}$ be the number of active vertices after $t^{th}$ exploration, and $X_{t}$ be a random variable that denotes the number of vertices discovered (converted from inactive to active) in the $t^{th}$ exploration. Then, following relation holds for any iteration (exploration) $t$:
\begin{equation}
S_{t} = S_{t-1}+X_{t}-1
\end{equation}
After $t-1$ explorations, we are left with $n-(t-1)-S_{t-1}$ vertices ($t-1:processed$ $vertices$, $S_{t-1}:active$ $vertices$). If $p$ is the probability of existence of an edge between any two nodes in the graph then conditionally $S_{t-1}$ on we have:
\begin{equation}
X_{t}\sim Bin(n-(t-1)-S_{t-1},p)
\end{equation}
according to Equation~4.1.4 in \cite{Hofstad}.
Next, we state a very well known binomial relation in mathematics that we use to compute the expected number of nodes at any level of BFS traversal. If $X$ $\sim Bin(n,p)$ then, $Pr(X=k)=\binom{n}{k} p^{k} (1-p)^{n-k}$. The expected value of $X$, $E[X]$ is:
\[
E[X]=\sum_{k=1}^{n} k \binom{n}{k} p^{k} (1-p)^{n-k} = np.
\]

Using above relation, we can write the expected value of $X_t$ conditioned on $S_{t-1}$ as:
\begin{equation}
 E[X_{t}]=(n-(t-1)-S_{t-1})p
 \label{eq11}
\end{equation}

Equation~\ref{eq11} can be used in the following way to calculate the expected number of nodes at a BFS level. Initially, there is a single (source) node as an active node, i.e., $S_{0}=1$. The expected number of nodes discovered in the first exploration will be:
\[
E[X_{1}]=(n-1)p
\]

After the first exploration, the total number of active nodes is $S_1=X_1$. Therefore, the expected number of active nodes after the first exploration will be $(n-1)p$. The expected number of nodes discovered in the second exploration will be $(n-1-(n-1)p)p$ or 

\[
 E[X_{2}]=(n-1)(1-p)p.
\]

Similarly, we can calculate following values: $E[X_{3}]=(n-1)(1-p)^{2}p$, $ E[X_{4}]=(n-1)(1-p)^{3}p$. In the above manner, we can also calculate the number of active nodes before each exploration and the expected number of nodes discovered in each exploration. 

Let $\alpha_{j}$ be the expected number of nodes at level j. We have $\alpha_{0}=1$ and $\alpha_{1}=(n-1)p$. Then by using Equation~\ref{eq11} we can calculate $\alpha_2$ as:

\begin{equation}
\alpha_{2}=\sum_{k=1}^{(n-1)p} (n-1)(1-p)^{k}p=(n-1)(1-p)[1-(1-p)^{(n-1)p}]
\end{equation}
Now, we can derive the formula for the general case. Let us assume that $m-1$ levels have been explored, i.e the nodes of level $m$ have been discovered. Now we have to explore the nodes of level $m$. At this step, the expected number of undiscovered node  is $(n-\sum_{j=0}^{m} \alpha_{j})$. As $p$ is a uniform probability for the existence of an edge between two nodes, exploring the first vertex of level $m$ discovers $(n-\sum_{j=0}^{m} \alpha_{j})p$ expected number of nodes for level $m+1$. Exploration of the next vertex discovers $[(n-\sum_{j=0}^{m} \alpha_{j})-(n-\sum_{j=0}^{m} \alpha_{j})p]p=(n-\sum_{j=0}^{m} \alpha_{j})(1-p)p$ expected number of nodes and so on. So the expected number of nodes at level $m+1$ will be:
\[
\alpha_{m+1}=\sum_{k=0}^{\alpha_{m}-1} (n-\sum_{j=0}^{m} \alpha_{j})(1-p)^{k} p
\]
or
\begin{equation}
\alpha_{m+1}=(n-\sum_{j=0}^{m} \alpha_{j})[1-(1-p)^{\alpha_{m}}]
\label{eq12}
\end{equation}
 
Most of the real-world graphs around us are sparse. Therefore, to take analyze random graphs with similar characteristics, let us assume that $ p \ll 1$. Applying Binomial expansion and neglecting higher order terms of p, we can rewrite Equation~\ref{eq12} as:
\[
\alpha_{m+1}\approx (n-\sum_{j=0}^{m} \alpha_{j}) \alpha_{m} p=n(1-\dfrac{\sum_{j=0}^{m} \alpha_{j}}{n}) \alpha_{m} p.
\]
\end{proof} 

Equation~\ref{eq1} is a recurrence relation to estimate the number of nodes at some level $m+1$. Using Lemma \ref{l1}, we can estimate the ratio between the expected number of nodes at two consecutive levels. The ratio is derived as follows.\\

In random graphs, the average degree $\lambda$ is equal to $(n-1)p$, where $n$ is the number of nodes and $p$ is the existential probability of an edge. $\lambda$ can be approximated as $np$ for large $n$. If we denote $(1-\dfrac{\sum_{j=0}^{m} \alpha_{j}}{n})$ as $c_{m+1}$ (the fraction of nodes below level $m$), then we can rewrite Equation~\ref{eq1} as $\alpha_{m+1}\approx c_{m+1} \lambda \alpha_{m}$
or
\begin{equation}
\dfrac{\alpha_{m+1}}{\alpha_{m}}\approx c_{m+1} \lambda 
\label{eq2}
\end{equation}
where $c_{m+1}\in[0,1)$.

Based on Equation~\ref{eq2}, we derive the formula to calculate the expected dependency of a node $i$ on node $v$, $E[\delta_{i\bullet}(v)]$ in next lemma. Then, we establish the ratio between the expected dependency of root node $i$ on two nodes at consecutive levels in Theorem~\ref{t3}.\\

\begin{lemma} Let $l$ be the last level in $BST_i$. Let $v$ be a node at $(l-2)$th level in the $BFS_i$. Then the expected dependency of node $i$ on node $v$ can be given as 
\begin{equation}
E[\delta_{i\bullet}(v)]\approx c_{l-1}\lambda(1+c_{l}\lambda).
\label{eq4}
\end{equation}
\label{l2}
\end{lemma}

\begin{proof}
Let the BFS traversal rooted at $i$ consist of $l+1$ levels. If $v$ is at the last level ($l$), then $\delta_{i\bullet}(v)=0$. Now, if $v$ lies at level $l-1$, then we can compute the expected dependency ($E[\delta_{i\bullet}(v)]$) as follows. Let $A_{l-1}$ be the expected number of paths of length $l-1$ from $i$ to any vertex at level $l-1$. Bauckhage et al.~\cite{Bauckhage} gave following expression for $A_{l-1}$ :
\begin{equation}
A_{l-1}=n^{l-2}\pi^{l-1}
\end{equation}
where $\pi=p \dfrac{n-1}{n}$. It is easy to observe that any node $w$ at level $l$ has $\alpha_{l-1}\cdot p$ expected number of parents (nodes at level $l-1$ which are connected to $w$ by a direct edge), so the expected number of shortest paths from $i$ to $w$ will be $A_{l-1}\cdot \alpha_{l-1}\cdot p$. Node $v$ lies only on $A_{l-1}$ expected number of shortest paths out of those shortest paths. From Equation~\ref{eq3}, it is easy to observe that the expected partial dependency from node $i$ to node $w$ on node $v$ is $E[\delta_{iw}(v)]=\dfrac{1}{\alpha_{l-1} p}$. Node $v$ has $\alpha_{l} \cdot p$ children similar to $w$. Therefore, the expected dependency of node $i$ on node $v$ is $E[\delta_{i\bullet}(v)]=\dfrac{\alpha_{l}}{\alpha_{l-1}}$ or using  Equation~\ref{eq2} we can rewrite:
\[E[\delta_{i\bullet}(v)]\approx c_{l}\lambda.\]
Similarly, if $v$ lies at level $l-2$, then the expected dependency of node $i$ on node $v$ can be given as: $$E[\delta_{i\bullet}(v)]=(\dfrac{\alpha_{l-1}}{\alpha_{l-2}})(1+c_{l}\lambda)\approx c_{l-1}\lambda(1+c_{l}\lambda).$$ 
\end{proof}
Now, we can give the theorem stating the ratio between dependencies of root node on two nodes positioned at two consecutive levels. 

\begin{theorem} Let $l$ be the last level in $BST_i$. Let $\delta_{i\bullet}(v_{l-k})$ be the dependency of node $i$ at a node $v_{l-k}$ at level $l-k$ and let $\delta_{i\bullet}(v_{l-k+1})$ be the dependency of node $i$ at a node $v_{l-k+1}$ at level $l-k+1$. Then we have
\begin{equation}
\dfrac{E[\delta_{i\bullet}(v_{l-k})]}{E[\delta_{i\bullet}(v_{l-k+1})]}=c_{l-k+1}(\dfrac{1}{\phi} + \lambda)
\label{eq5}
\end{equation}  
where $\phi=(c_{l-k+2})(1+c_{l-k+3}\lambda(1+c_{l-k+4}\lambda(1+c_{l-k+5}\lambda(1+\cdots(1+c_{l}\lambda))\cdots).$
\label{t3}
\end{theorem}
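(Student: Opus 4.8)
The plan is to first extract, from the proof of the preceding lemma, the general recurrence that the expected dependency satisfies as one climbs the levels of $BFS_i$, and then read the claimed ratio off that recurrence by a short unrolling. Writing $D_j := E[\delta_{i\bullet}(v_j)]$ for the expected dependency of the root $i$ on a node $v_j$ at level $j$, and letting $l$ be the last level, the computation in Lemma 3 already supplies the two base cases $D_l = 0$ and $D_{l-1} = c_l\lambda$. I would argue that the identical reasoning, carried out at an arbitrary level, yields for every $j < l$ the recurrence
\begin{equation}
D_j = \frac{\alpha_{j+1}}{\alpha_j}\,(1 + D_{j+1}) = c_{j+1}\lambda\,(1 + D_{j+1}),
\label{plan:rec}
\end{equation}
where the second equality is just equation (\ref{eq2}). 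A quick check confirms that (\ref{plan:rec}) reproduces Lemma 3: it gives $D_{l-1}=c_l\lambda$ and $D_{l-2}=c_{l-1}\lambda(1+c_l\lambda)$.

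To justify (\ref{plan:rec}) I would reuse the three ingredients of Lemma 3's argument, now for a node $v_j$ at a generic level $j$: each level-$(j+1)$ node has $\alpha_j p$ parents, so a level-$j$ node has in expectation $\alpha_{j+1}p$ children at level $j+1$; the expected shortest-path count from $i$ to a level-$(j+1)$ node is $\alpha_j p$ times that to a level-$j$ node, giving partial dependency $\sigma_{iv_j}/\sigma_{iw}=1/(\alpha_j p)$ for each child $w$; and Brandes' identity (\ref{eq3}) then accumulates these as $\alpha_{j+1}p\cdot\frac{1}{\alpha_j p}(1+D_{j+1})$, which collapses to (\ref{plan:rec}). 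This is the step I expect to be the main obstacle, since it requires handling the expectations over $G(n,p)$ consistently — treating the child count, the path counts, and the recursive dependency term as if they factorize — i.e. the same heuristic independence already invoked in the earlier lemmas; making it fully rigorous rather than plausible is where the real work lies.

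Granting (\ref{plan:rec}), the rest is pure algebra. Applying it at $j=l-k$ gives $\delta_{i\bullet}(v_{l-k})=c_{l-k+1}\lambda\,(1+\delta_{i\bullet}(v_{l-k+1}))$, hence
$$
\frac{\delta_{i\bullet}(v_{l-k})}{\delta_{i\bullet}(v_{l-k+1})} = c_{l-k+1}\lambda + \frac{c_{l-k+1}\lambda}{\delta_{i\bullet}(v_{l-k+1})}.
$$
It then remains to unroll (\ref{plan:rec}) downward from level $l-k+1$ to the base value $D_{l-1}=c_l\lambda$, producing
$$
\delta_{i\bullet}(v_{l-k+1}) = c_{l-k+2}\lambda\bigl(1 + c_{l-k+3}\lambda(1 + \cdots(1 + c_l\lambda)\cdots)\bigr) = \lambda\phi,
$$
with $\phi$ exactly the nested quantity in the statement, since the leading factor $\lambda$ pulls out of $c_{l-k+2}\lambda(\cdots)$. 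Substituting $\delta_{i\bullet}(v_{l-k+1})=\lambda\phi$ cancels one $\lambda$ in the second term and leaves $c_{l-k+1}\lambda + c_{l-k+1}/\phi = c_{l-k+1}(\lambda + 1/\phi)$, which is (\ref{eq5}). I would close by noting that the entire theorem is simply the level-indexed unrolling of the single recurrence (\ref{plan:rec}), so the only substantive content is establishing that recurrence and identifying the unrolled denominator with $\lambda\phi$.
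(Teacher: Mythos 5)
Your proposal is correct and takes essentially the same route as the paper: both rest on the level-wise dependency values from Lemma 3, the ratio $\delta_{i\bullet}(v_{l-2})/\delta_{i\bullet}(v_{l-1})=\frac{c_{l-1}}{c_l}(1+c_l\lambda)$ being the paper's entire explicit computation before it asserts the general case "in general." In fact you supply more than the paper does, since you make the implicit recurrence $D_j=c_{j+1}\lambda(1+D_{j+1})$ explicit, unroll it to identify the denominator with $\lambda\phi$, and verify the algebra yielding $c_{l-k+1}(\tfrac{1}{\phi}+\lambda)$, all of which the paper omits.
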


\begin{proof}
The ratio of expected dependencies of node $i$ on $v$, when $v$ lies at level $l-2$, ($E[\delta_{i\bullet}(v_{l-2})]$) to when $v$ lies in level $l-1$, ($E[\delta_{i\bullet}(v_{l-1})]$) is
\begin{equation}
\dfrac{E[\delta_{i\bullet}(v_{l-2})]}{E[\delta_{i\bullet}(v_{l-1})]}=\dfrac{c_{l-1}}{c_{l}} (1+c_{l}\lambda).
\end{equation}
In general, the ratio of the expected dependencies for two successive levels $l-k$ and $l-k+1$ can be given as Equation~\ref{eq5}. 
\end{proof}

It is easy to observe that $c_{m}$ decreases continuously as $m$ increases. As $v$ becomes one level closer to $i$, the expected dependency of $i$ on $v$, $E[\delta_{i\bullet}(v)]$ increases proportional to the average degree $\lambda$. Therefore, on the basis of Theorem~\ref{t3}, we can assign a probability $p_{i}$ to the node $i$ defined as following.\\
\begin{definition} Suppose, we have to compute the betweenness score of node $v$. Then the sampling probability assigned to node $i$ is :
\begin{equation}
p_{i}\propto (\lambda)^{-d(i,v)}
\label{eq15}
\end{equation}
where $d(i,v)$ is the distance between $v$ and $i$.
\end{definition}

\subsection{Further Tweak}
In this section, we discuss some of the observations and propose some possible solutions to tackle an observed problem. In $BFT_v$, nodes at the same level are called siblings. We define successors of a node $j$ in $BFT_v$, $Succ_v(j)$, as the set of nodes to which at least one shortest path from $v$ passes through $j$. Similarly, we define predecessors of a node $j$ in $BFT_v$, $Pred_v(j)$, as the set of predecessors. Let $Reach_j^v$ be the set of nodes that are at most as far as $v$ from $j$.\\

\begin{observation} In the BFS tree rooted at the given node $v$, siblings get equal probabilities by Equation~\ref{eq15}, but might not contribute equally in the betweenness of $v$.
\end{observation}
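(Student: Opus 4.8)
The first half of the statement is immediate from the form of equation (\ref{eq15}). By definition two siblings $i$ and $j$ in $BFT_v$ occupy the same level, so $d(i,v)=d(j,v)$. Since the weight prescribed by equation (\ref{eq15}) is a function of the distance alone, $p_i\propto\lambda^{-d(i,v)}$, the two siblings receive identical unnormalized weights and hence identical normalized probabilities $p_i=p_j$. I would dispose of this in a single line.

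The substance of the observation lies in the qualifier ``might not'': I must show that equality of probabilities does not force equality of the true contributions $\delta_{i\bullet}(v)$ and $\delta_{j\bullet}(v)$. The plan is to exhibit one small graph in which two siblings of $v$ carry different dependencies. The conceptual reason is that $\delta_{i\bullet}(v)=\sum_{t}\delta_{it}(v)=\sum_{t}\sigma_{it}(v)/\sigma_{it}$ is governed by how much of the graph is reached from the source $i$ \emph{through} $v$ (the portion of $BFT_i$ hanging below $v$, together with the relevant shortest-path counts), and this quantity is entirely invisible to the single number $d(i,v)$ that drives equation (\ref{eq15}).

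Concretely, I would take $V=\{v,i,j,a,b,c\}$, make $i$ a pendant vertex attached only to $v$, and attach $j$ both to $v$ and to each of $a,b,c$, while also joining $a,b,c$ directly to $v$. Then the level-$1$ set of $BFT_v$ is $\{i,j,a,b,c\}$, so $i$ and $j$ are siblings and $p_i=p_j$. From the source $i$ every shortest path to $j,a,b,c$ is forced through $v$, giving $\delta_{i\bullet}(v)=4$; from the source $j$ the vertices $a,b,c$ are reached by a direct edge rather than through $v$, and only $i$ is reached through $v$, giving $\delta_{j\bullet}(v)=1$. Thus the two siblings carry equal probabilities but unequal contributions, which is exactly the assertion. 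Verifying the six pairwise distances and the handful of path counts is routine, and I would present it as a short table.

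The only step needing care --- and it is minor for an observation rather than a theorem --- is the choice of witness: I must ensure the shortest-path \emph{counts} $\sigma_{st}$ do not accidentally restore symmetry between $i$ and $j$. The pendant-versus-hub configuration above makes every relevant $\sigma_{st}$ equal to $1$, so the dependencies collapse to plain counts of reachable targets and the asymmetry is transparent; this is where I would spend what little effort the proof demands. Should one wish to argue that the phenomenon is typical rather than anecdotal, the natural (but here unnecessary) follow-up would be to recall that the expected-dependency formula (equation (\ref{eq4})) is an average over the $G(n,p)$ ensemble and therefore constrains only the mean, not the per-vertex spread among siblings of a fixed graph.
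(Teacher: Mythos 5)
Your proposal is correct and takes essentially the same approach as the paper: the first half follows immediately from the distance-only dependence of equation (\ref{eq15}), and the second half is established by exhibiting a witness graph in which two siblings of $v$ carry unequal dependencies $\delta_{i\bullet}(v)\neq\delta_{j\bullet}(v)$. The paper's witness (Fig.~\ref{fig1}) differs only in detail --- there $\delta_{i\bullet}(v)=1$ while $\delta_{j\bullet}(v)=n-2$, so the gap grows with $n$, whereas your pendant-versus-hub construction gives the fixed values $4$ versus $1$ --- but either suffices for the ``might not contribute equally'' claim.
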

\begin{comment}
\begin{figure}
\centering
\includegraphics[height=4cm]{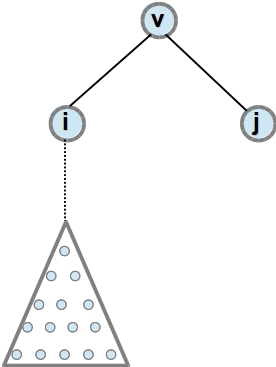}
\caption{An example $BST_{v}$}
\label{fig1}
\end{figure}
\end{comment}

For example, consider a connected undirected graph with $n$ nodes and following attributes. Let $v$ be a considered node in the graph for estimating betweenness score. Let $v$ be the neighbor of two nodes $i$ and $j$, where $j$ is a terminal node (node with degree 1). Then, $v$ is connected to the rest of $n-3$ nodes via $i$. When the BFS traversal rooted at node $v$ is drawn, node $v$ takes the place of the root node and $i$ and $j$ falls at level $1$. At $i$, a subtree with $n-2$ nodes hangs while $j$ does not have any child. Due to lying at the same level in the BFS traversal, according to Equation~\ref{eq15}, equal probabilities will be assigned to both $i$ and $j$. But, $\delta_{i\bullet}(v)=1$ and  $\delta_{j\bullet}(v)= n-2>>1$. Thus we need to tweak the formula to resolve this problem.\\

\begin{observation} In $BST_i$, no node from $Succ_v(i) \cup Pred_v(i) \cup Reach_i^v$ will contribute in $\delta_{i\bullet}(v)$.
\end{observation}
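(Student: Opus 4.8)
The plan is to reduce the whole claim to a single arithmetic statement about shortest-path distances. Recall that a node $t$ contributes to $\delta_{i\bullet}(v)=\sum_{t}\delta_{it}(v)$ exactly when $\delta_{it}(v)=\sigma_{it}(v)/\sigma_{it}>0$, i.e. when at least one shortest $i$--$t$ path passes through $v$; this is a property of shortest paths emanating from $i$, so it is faithfully read off $BST_i$. Since the graph is unweighted, $v$ lies on some shortest $i$--$t$ path if and only if $d(i,v)+d(v,t)=d(i,t)$. Hence it suffices to show that for every $t$ (necessarily with $t\neq i,v$, the only nodes appearing in the sum) belonging to $Succ_v(i)\cup Pred_v(i)\cup Reach_i^v$, this additive equality fails.

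First I would dispose of $Reach_i^v$. By definition $t\in Reach_i^v$ means $d(i,t)\le d(i,v)$. If $v$ lay on a shortest $i$--$t$ path we would have $d(i,t)=d(i,v)+d(v,t)$; combined with $d(i,t)\le d(i,v)$ this forces $d(v,t)\le 0$, hence $t=v$, which is excluded. So no node of $Reach_i^v$ contributes. The set $Pred_v(i)$ folds into this case: a predecessor $t$ of $i$ lies on a shortest $v$--$i$ path, so $d(v,t)+d(t,i)=d(v,i)$, and since $t\neq v$ we get $d(i,t)=d(v,i)-d(v,t)<d(i,v)$; thus $t\in Reach_i^v$ and the previous argument applies verbatim.

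The remaining case is $Succ_v(i)$, which needs its own short computation. By definition $t\in Succ_v(i)$ means some shortest $v$--$t$ path runs through $i$, i.e. $d(v,t)=d(v,i)+d(i,t)$, so $d(i,t)=d(v,t)-d(v,i)$. Were $v$ also on a shortest $i$--$t$ path we would simultaneously have $d(i,t)=d(i,v)+d(v,t)$. Equating the two expressions for $d(i,t)$ and using the symmetry $d(i,v)=d(v,i)$ yields $d(v,t)-d(v,i)=d(v,i)+d(v,t)$, i.e. $d(v,i)=0$, forcing $i=v$, a contradiction. Hence no successor of $i$ contributes to $\delta_{i\bullet}(v)$ either.

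The argument is entirely metric, so I do not anticipate a genuine obstacle; the only points requiring care are (i) pinning down the intended meanings of $Succ_v(i)$, $Pred_v(i)$ and $Reach_i^v$ from the definitions stated just above the observation, and (ii) remembering that the endpoints $t=i$ and $t=v$ are excluded from the dependency sum, which is precisely what upgrades the weak inequality in the $Reach_i^v$ case into an outright contradiction.
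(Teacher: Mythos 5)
Your proof is correct. Note, however, that the paper itself offers no proof of this statement at all: it is asserted as an ``Observation,'' with the surrounding text immediately moving on to its consequence ($\delta_{i\bullet}(v)\propto 1/|Succ_v(i)\cup Pred_v(i)\cup Reach_i^v|$), so there is nothing in the paper to compare against step by step. What you supply is therefore a genuine addition: you reduce ``$t$ contributes to $\delta_{i\bullet}(v)$'' to the additivity criterion $d(i,t)=d(i,v)+d(v,t)$ for unweighted graphs, and then kill each of the three sets by elementary distance arithmetic --- $Reach_i^v$ directly (the inequality $d(i,t)\le d(i,v)$ forces $d(v,t)\le 0$), $Pred_v(i)$ by showing it is contained in $Reach_i^v$ (a predecessor is strictly closer to $i$ than $v$ is), and $Succ_v(i)$ by a separate two-equation contradiction giving $d(v,i)=0$. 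This matches the paper's intended reading of ``contribute'' (the node is a target $t$ in the sum $\sum_t \delta_{it}(v)$ with $\delta_{it}(v)>0$), correctly uses the exclusion $t\notin\{i,v\}$ where it is needed, and handles the one ambiguity in the paper's informal definitions (whether $Pred_v(i)$ means immediate parents or all ancestors in $BFT_v$) in a way that covers both readings. The only stylistic remark is that the $Succ_v(i)$ case could also be folded into the same template as the others, since $t\in Succ_v(i)$ gives $d(v,t)=d(v,i)+d(i,t)$, whence $v$ on a shortest $i$--$t$ path would force $2d(i,v)=0$ exactly as you derive; but your split into three cases is clear and loses nothing.
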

\begin{comment}
\begin{figure}[h]
\centering
\subfigure[]{%
\includegraphics[width=3.5cm]{im2.jpg}
\label{fig2}}
\quad
\subfigure[]{%
\includegraphics[width=3cm]{im3.jpg}
\label{fig3}}
\caption{ }
%\label{fig4}
\end{figure}
Figure~\ref{fig2} is $BST_{v}$ and we have $Succ_v(i)=dt_i+st_{ij}$. Figure~\ref{fig3} is $BST_{i}$.
It is easy to observe that none of the shortest paths from $i$ to the nodes in $Succ_v(i)$ will ever pass through node $v$. Thus these nodes will not contribute in $\delta_{i\bullet}(v)$.
\end{comment}

Observation 2 infers that in $BFT_v$, a node $i$ with larger number of successors will contribute ($\delta_{i\bullet}(v)$) lesser, i.e., the relation can be assumed as $\delta_{i\bullet}(v)\propto \frac{1}{|Succ_v(i) \cup Pred_v(i) \cup Reach_i^v|}$. Then based on the assumption, the probability assigned to node $i$ should also satisfy the following relation:
\[p_{i}\propto \frac{1}{|Succ_v(i) \cup Pred_v(i) \cup Reach_i^v|}.\]

Maintaining the sets $Succ_v(i)$, $Pred_v(i)$, and $Reach_i^v$ for each node in $BFT_v$ (graph) can not be achieved in linear time. At place of $Succ_v(i) \cup Pred_v(i) \cup Reach_i^v$, we use degree of node $i$. One reason for using degree is that it can be linearly computed and most of the time, it is the best predictor for the number of successors, predecessors in random graphs. Another reason is the high correlation between the Betweenness centrality and the Degree centrality \cite{Valente, Lee:2006}. 

\begin{comment}
It means a node with high degree is most probable to be a high betweenness central node. It is also easy to observe that if a node is high betweenness central then in the BFS tree rooted at that node, it may have many other optional nodes (siblings) of $v$ to reach other nodes below $v$.
\end{comment}
Thus, to overcome the problem stated in Observation 1, we also include the following relation:  

\begin{equation}
p_{i}\propto \frac{1}{deg(i)}
\label{eq16}
\end{equation}
where $deg(i)$ is the degree on node $i$ in the given graph. The final relation can be written as: 
\[p_{i}\propto \frac{ (\lambda)^{-d(v,i)}}{deg(i)}.\] 

We use the distance as the inverse power of the exponential function over the average degree, and the inverse of degree for modeling the probability generation. Thus we name this model \textit{EDDBM (exponential in the inverse of distance and inverse of degree based model)}. Next, we discuss the steps to generate the probabilities according to EDDBM.

\subsection{EDDBM \label{EDDBM}}
We generate the probabilities as following. First, we generate the probabilities on the basis of distance relation given in Equation~\ref{eq15}. Each node $i$ at level $d$ in the $BFT_v$ will get following probability values:
\[p^d= \frac{(\lambda)^{-d}}{\sum_{j\in V \setminus \{v\}} (\lambda)^{-d(j,v)}}.\]
Let $V_d$ be the set of nodes at level $d$ in the $BFT_v$ and $|V_d|$ denotes the number of nodes in set $V_d$. Then to resolve the problem stated in Observation 1 to best extent, at each level $d$, we further tweak the formula on the basis of Equation~\ref{eq16} and get the assigned probability to node $i$ at $d^{th}$ level as:
\begin{equation}
 p_{i}= \frac{p^d |V_d|\cdot deg(i)^{-1}}{\sum_{j\in V_d} deg(j)^{-1} }.
 \label{eddbm}
\end{equation}
\section{Betweenness-Ordering Heuristic} In this section, we discuss our approach for solving the betweenness-ordering problem based on the new non-uniform based sampling EDDBM. Then, we discuss betweenness-ordering problem on $k$ nodes, which we refer to as the $k$-betweenness-ordering.
\subsection{Betweenness-Ordering : Ordering 2 nodes}
Given a graph $G$, this algorithm orders two nodes $u$ and $v$ by first efficiently  estimating their betweenness scores and then comparing the scores. The algorithm is summarized as Algorithm~\ref{al2}.    
\begin{algorithm}[h]$Betweenness\_Ordering(G,u,v,T)$
\begin{algorithmic}[1]
\STATE \textbf{Input.} Graph $G$, node $u$ and node $v$, number of samples $T$.
\STATE Generate $P_u$, set of probabilities for each node based on EDDBM model in $BFT_u$.
\STATE Estimate the betweenness score of node $u$, $B'(u)=Estimate(G,P_u,u,T)$.
\STATE Generate $P_v$= set of probabilities for each node based on EDDBM model in $BFT_v$.
\STATE Estimate the betweenness score of node $v$, $B'(v)=Estimate(G,P_v,v,T)$.

\STATE \textbf{Return.} The result of comparison between $B'(u)$ and $B'(v)$.
\end{algorithmic}
\caption{: Betweenness-Ordering algorithm.}
\label{al2}
\end{algorithm}

Step 2 and Step 3 estimates betweenness score of node $u$. Step 4 and step 5 estimates the betweenness score of node $v$. Step 2 (step 4) generates non-uniform sampling probabilities using the EDDBM model (Equation~\ref{eddbm}) in relation to node $u$ ($v$). Step 3 (step 5) estimates betweenness score of node $u$ ($v$) by passing the generated probabilities to Algorithm~\ref{BO_algo1} .

\subsection{$k$-Betweenness-Ordering : Ordering $k$ nodes}
Algorithm~\ref{al2} orders $k=2$ nodes in a given graph based on the betweenness of nodes. Given a graph $G$ and the problem to order $k$ nodes, we can extend Algorithm~\ref{al2} to handle the case simply by running step 2 and step 3 of Algorithm~\ref{al2} as a sub-procedure for each of the $k$ nodes to estimate their betweenness scores. Once the betweenness centrality of each of the $k$ nodes is estimated, ordering can be done by running a sorting algorithm. One of the most popular sorting algorithms is Merge sorting \cite{Knuth:1998} which takes $O(k \log k)$ time to sort $k$ objects based on object's value.

\subsection{Computation Time}
The time complexity of procedure $Estimate(G,P,v,T)$ (Algorithm~\ref{BO_algo1}) is $O(Tm)$ \cite{Chehreghani:2014}, where $m$ is the number of edges in the graph $G$. It is due to $T$ iterations of breadth first traversals, each of which takes $O(m)$ time. Non-uniform probabilities generation based on EDDBM can be done in $O(m)$ time because it uses one iteration of breadth first traversal. Thus, Algorithm~\ref{al2} ($Betweenness-Ordering(G,u,v,T)$) takes $2\cdot O(m)+2\cdot O(Tm)+1$ = $O(Tm)$ time. Similarly,  the extended version of Algorithm~\ref{al2} ($k-Betweenness-Ordering(G,U)$) takes $O(kTm+k\log k)$ time, where $O(kTM)$ factor is due to k times call to Algorithm~\ref{BO_algo1} and $O(k\log k)$ factor is for sorting. In section \ref{samples_r}, we note that only a constant number of iterations ($T$) seems heuristically to be enough for providing efficient betweenness-ordering. Thus, the running time of the proposed heuristic $Betweenness-Ordering(G,u,v, T)$ is $O(m)$ in this paper by fixing $T$ as a constant. Similarly, if $k$ is very smaller than $n$ (number of total nodes) in the $k$-betweenness-ordering problem and is a constant then the running time of the proposed heuristic for $k$-Betweenness-Ordering problem also becomes $O(m)$.

\section{Experimental Results \label{BO_5}}
In this section, we discuss the experimental results on datasets including real-world graphs and synthetic graphs. We have implemented all the algorithms in C++. All the simulations were performed on a CentOS 6.5 machine with 2x (Xeon E5-2670V2(10 Core,2.5Ghz)) processors and 96 GB RAM.

\subsection{Dataset} 
\subsubsection{Real Networks \label{net}}
We have picked some real-world networks that are popularly used as benchmark networks for betweenness computation and estimation \cite{Chehreghani:2014,Green12,min12,Riondato,Kas13}. We restricted the detailed analysis only to the networks with the number of nodes less than 40,000 due to computational constraints. We provide a brief summary of these networks in Table \ref{tab_real} and \cite{snap, sparse} can be referred for a detailed description of the networks. We have considered collaboration networks, citation networks, communication network, social network, internet peer to peer network and some other. The columns of the Table \ref{tab_real} consist names of the network instances, the number of nodes ($n$), the average degree of the nodes in the networks (Avg. Deg.), number of nodes with zero betweenness score (Z-BC) and the network type respectively.   

\begin{table}[htbp]
\centering
\caption{Considered Real-World Networks}
\resizebox{.7\columnwidth}{!}{
\begin{tabular}{|l|c|c|c|l|}
\hline
\multicolumn{1}{|c|}{\textbf{Instance name}} & \textbf{$n$} & \textbf{Avg. Deg.} & \textbf{Z-BC } & \multicolumn{1}{c|}{\textbf{Network Type}} \\ \hline \hline
as20000102 \cite{snap}& 6474 & 3.88384 & 3682 & Autonomous systems graph\\ 
Wiki-Vote \cite{snap}& 7115 & 28.3238 & 2517 & Social Network\\ 
wb-cs-stanford \cite{sparse}& 9435 & 5.81388 & 2814 & Web Graph\\ 
CA-HepTh \cite{snap}& 9877 & 5.25929 & 5291 & Collaboration network\\ 
oregon1\_010331 \cite{snap}& 10670 & 4.12409 & 6285 & Autonomous systems graph\\ PGPgiantcompo \cite{sparse}& 10680 & 4.55356 & 5663 & Social Network\\ 
oregon1\_010526 \cite{snap}& 11174 & 4.18991 & 6520 & Autonomous systems graph\\
CA-HepPh \cite{snap}& 12008 & 19.735 & 6304 & Collaboration network\\ 
CA-AstroPh \cite{snap}& 18772 & 21.1006 & 8446 & Collaboration network\\ 
p2p-Gnutella25 \cite{snap}& 22687 & 4.82259 & 9348 & Internet peer-to-peer network\\ 
as-22july06 \cite{sparse}& 22963 & 4.21861 & 11927 & Internet Routers Network\\ 
CA-CondMat \cite{snap}& 23133 & 8.07842 & 12635 & Collaboration network\\ 
Cit-HepTh \cite{snap}& 27770 & 25.3716 & 2345 & Citation Network\\ 
Cit-HepPh \cite{snap}& 34546 & 24.3662 & 2120 & Citation Network\\ 
p2p-Gnutella30 \cite{snap}& 36682 & 4.81588 & 16531 & Internet peer-to-peer network\\ 
Email-Enron \cite{snap}& 36692 & 10.0202 & 23710 & Communication Network\\ \hline
\end{tabular}}
\label{tab_real}
\end{table}

\subsubsection{Synthetic Networks \label{syn_n}}
We considered following types of synthetic graphs:
\begin{enumerate}
\item \textbf{Random Graphs (ER).} For generating random graphs, we have considered the most extensively used random graph generation $G(n,p)$ model given by Erdos Renyi~\cite{ER}. The model takes as input the number of nodes $n$ and a probability $p$. Then for each possible pair of different nodes, it puts an edge with a probability of $p$ and outputs the generated graph. We also referred this probability as edge existential probability in this paper.  

\item \textbf{Scale-free Random Graphs (BA).} For generating scale-free random graphs, we have considered the Barabasi-Albert graph generation model \cite{BA}. Throughout the paper, we denote it as $H(n,k)$. It takes as input the number of nodes ($n$) and an integer $k$. It starts with a complete graph of size $k$ and keep adding random $k$ different edges from new coming nodes to the existing nodes with probability as the normalized degree of old nodes. This model is also referred as preferential attachment model.  
\end{enumerate} 
Table \ref{tab_synthetic} summarizes the details of the considered synthetic graphs where ER\_n\_x stands for $G(n, \frac{n^\frac{1}{x}}{n})$ and BA\_n\_x stands for $H(n, \lfloor \frac{n^\frac{1}{x}}{2}\rfloor)$.  The columns of the Table \ref{tab_synthetic} consist name-label to the networks, size of networks ($n$), other parameters (p/k) that needs to be fixed in the generation of synthetic networks, average degree of the nodes in the networks (Avg. Deg.), edges in the network and average number of nodes with zero betweenness score (Avg. Z-BC) respectively.   
\begin{table}[htbp]
\centering
\tiny
\caption{Considered Synthetic Networks}
\resizebox{.8\columnwidth}{!}{
\begin{tabular}{|c|c|l|l|l|c|}
\hline
\textbf{Instance name} & \textbf{n} & \multicolumn{1}{c|}{\textbf{p/k}} & \multicolumn{1}{c|}{\textbf{Avg. Deg.}} & \multicolumn{1}{c|}{\textbf{Edge}} & \multicolumn{1}{c|}{\textbf{Avg. Z-BC}}\\ \hline \hline
ER\_1k\_2 & 1000 & 0.03162278 & 31.6976 & 15848 &0 \\ 
ER\_1k\_3 & 1000 & 0.01 & 10.01 & 5005 &0.2 \\ 
ER\_1k\_4 & 1000 & 0.00562341 & 5.616 & 2808 & 20.4\\ 
ER\_1k\_8 & 1000 & 0.00237137 & 2.4044 & 1202 & 306.4\\ %\hline
%\textbf{} &  &  &  &  \\ 
ER\_10k\_2 & 10000 & 0.01 & 99.96312 & 499816 & 0\\ 
ER\_10k\_3 & 10000 & 0.002154435 & 21.48508 & 107425 & 0\\ 
ER\_10k\_4 & 10000 & 0.001 & 10.00588 & 50029 & 5.2\\ 
ER\_10k\_8 & 10000 & 0.000316228 & 3.1786 & 15893 & 1728.4
\\ %\hline
%\textbf{} &  &  &  &  \\ 
BA\_1k\_2 & 1000 & 16 & 31.488 & 15744 & 0\\ 
BA\_1k\_3 & 1000 & 5 & 9.95 & 4975 & 0\\ 
BA\_1k\_4 & 1000 & 3 & 5.982 & 2991& 0 \\ 
BA\_1k\_8 & 1000 & 2 & 3.992 & 1996 & 21.2\\ %\hline
%\textbf{} &  &  &  &  \\ 
BA\_10k\_2 & 10000 & 50 & 99.5 & 497500 & 0 \\ 
BA\_10k\_3 & 10000 & 11 & 21.9758 & 109879 & 0\\ 
BA\_10k\_4 & 10000 & 5 & 9.995 & 49975 & 0\\ 
BA\_10k\_8 & 10000 & 2 & 3.9992 & 19996 & 32.8\\ \hline
\end{tabular}}
\label{tab_synthetic}
\end{table}

\subsection{Performance Measurement Tools}
In this section, we discuss various measures used to evaluate the performance of our model. 

\subsubsection{For Betweenness Estimation : Error and Average Error \label{averr}}
Let a graph $G=(V,E)$ with $|V|=n$ is given. Let $BC^e(v)$ be the exact betweenness score of node $v$ in the given graph. Let $BC^a(v)$ be the betweenness score of the same node $v$ computed by Algorithm~\ref{BO_algo1} using probabilities generated by EDDBM. Then, as defined by Chehreghani~\cite{Chehreghani:2014}, the error in computation of betweenness score on node $v$ is computed as : 
\[Er(v) = \frac{|BC^e(v)-BC^a(v)|}{BC^e(v)} \times 100\] 
We define \textit{average error} $E$ in the computation of betweenness score of a set of nodes $U$, $U\subseteq V$, over a graph $G$ as 
\[E=\frac{\sum_{i\in U} Er(i)}{|U|}\]
where $|U|$ denotes the number of nodes in set $U$. To compute average error in the computation of betweenness score in a graph, we considered $U=\{v:v\in V \; and\; BC^e(v)>0\}$ throughout the paper. To find the average error in the betweenness computation for a node, we take mean of the error over five iterations. For synthetic graphs, we take mean of the average error over five such synthetic graphs. \textit{Number of iterations} used for computation of betweenness score is referred as the number of sampled nodes. We denote it by $T$.

\subsubsection{For Betweenness-Ordering : Efficiency and Relaxed Efficiency \label{eff}} 
Let $n$ be the number of nodes in the considered graph. Then, $\binom{n}{2}$ different pairs of nodes are possible. Let $b_{ij}=1$ if the result of betweenness comparison between node $i$ and node $j$ by our algorithm is correct, otherwise $b_{ij}=0$. The efficiency of algorithm for betweenness-ordering of two nodes can be given as 
\[\xi=\frac{\sum_{i=1}^{n-1} \sum_{j=i+1}^n b_{ij}}{\binom{n}{2}}.\]
In real world scenarios when two nodes possess very close betweenness ranks, error in the betweenness-ordering of those two nodes does not matter much. For example importance of the top central node or second top central node is very close. Thus, a relaxed version of the efficiency measure can be modeled. We take a threshold $t$ and relax the ordering of nodes if the difference between the betweenness ranks of both the nodes is less or equal to $t$. By relaxing, we mean that we do not consider those pairs for measuring the efficiency of an algorithm.
Let $P_t$ be the set of all pairs of nodes with betweenness rank difference greater than $t$ and $|P_t|$ denotes the cardinality of set $P_t$. Let $b_{ij}$ be a flag variable which gets value $1$ for correct comparison and $0$ otherwise. Then we can redefine the relaxed efficiency as:
\[\xi^t=\frac{\sum_{(i,j)\in P_t} b_{ij}}{|P_t|}.\]
At $t=0$, $\xi^t=\xi$. In \ref{relaxed_ef}, we show results for $t=\{2,3,5,10\}$ on considered synthetic graphs. Next, we name the algorithms picked for betweenness estimation and ordering analysis.
\subsection{Considered Competitive Algorithms \label{comalg}}
In this section, we mention the algorithms picked for comparative analysis with our approaches for betweenness-ordering and betweenness estimation. We consider following labels \textit{BP/B, LS, MC, BOLT, 2-BC }for the Brandes and Pich's (time bounded version of Bader et al.'s~\cite{Bader:2007}) uniform sampling based approximation algorithm, Geisberger et al.'s linear scaling based algorithm \cite{Geisberger:2008}, a recent algorithm by Chehreghani~\cite{Chehreghani:2014}, our algorithm and  $k$-betweenness algorithm due to Gkorou et al.~\cite{Gkorou} with $k=2$ respectively. The first four algorithms are the node sampling based (probabilistic) algorithms, and we fix an equal number of samples for each one of these algorithms. The last, 2-BC algorithm is a deterministic algorithm and takes a lot more time than the sampling based algorithms. We have considered this algorithm to show that even with very few samples (very less time), most of the times, our algorithm outperforms this deterministic algorithm. A recent path sampling based  Riondato and Kornaropoulos's~\cite{Riondato} randomized algorithm is theoretically sound, but in the small time frame fixed by us, it does not perform well for estimation or ordering on any considered network. Thus, we skip the results by their algorithm. Next, we see various plots to analyze our model and algorithms.

\subsection{Plots for Betweenness estimation and ordering}
In this section, we evaluate the performance of EDDBM (from section \ref{EDDBM}) for estimation and ordering through various plots. Next, with the help of different plots on synthetic and real-world networks, we compare the accuracy of EDDBM in comparison to DBM (from section \ref{DBM}).

\subsubsection{Comparison of probabilities assigned by DBM, EDDBM, and optimal model}
Plots in this section analyze the performance of EDDBM experimentally and shows that EDDBM generates probabilities very close to the optimal probabilities. These plots also compare EDDBM with DBM.
We draw the plots in Figure~\ref{figps} for four different synthetic networks which we pick from Table \ref{tab_synthetic}. 

\begin{figure}[htb]
\centering
\subfigure[BA\_1k\_4]{%
\includegraphics[width=.48\columnwidth]{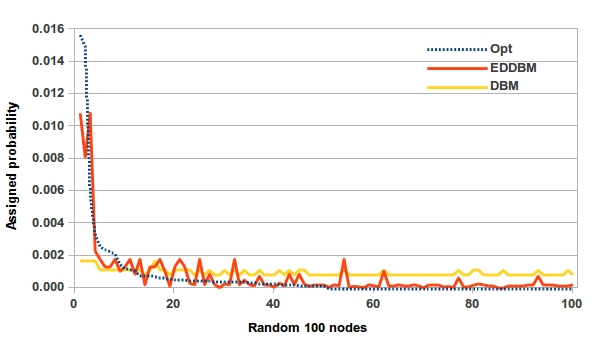}
\label{figps.1}}
\subfigure[BA\_1k\_3]{%
\includegraphics[width=.48\columnwidth]{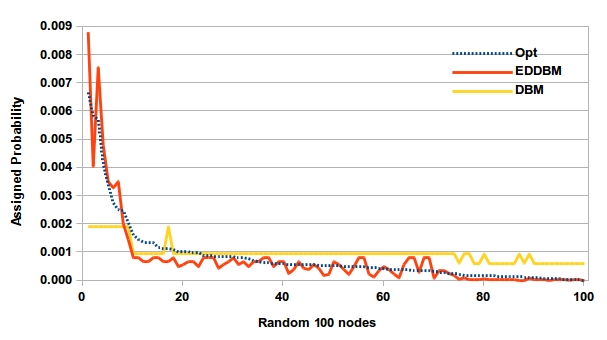}
\label{figps.2}}
\subfigure[ER\_1k\_4]{%
\includegraphics[width=.48\columnwidth]{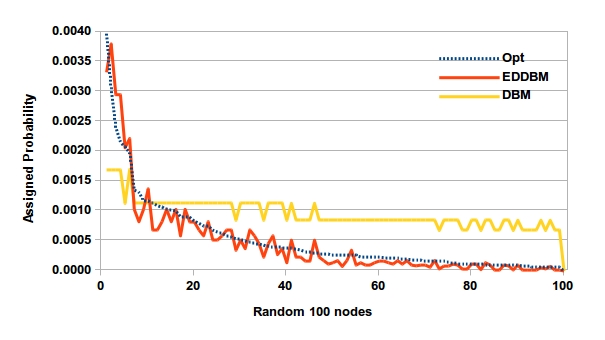}
\label{figps.3}}
\subfigure[ER\_1k\_3]{%
\includegraphics[width=.48\columnwidth]{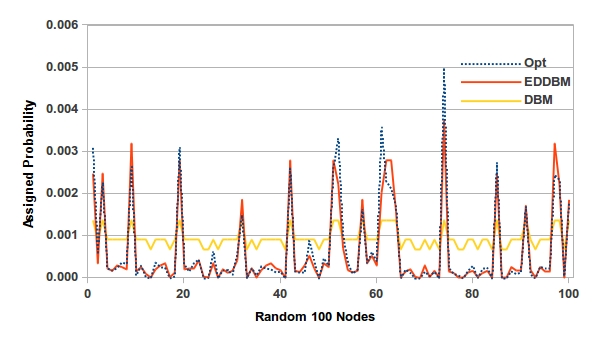}
\label{figps.4}}
\caption{Comparison of probabilities assigned by DBM and EDDBM vs the optimal model (opt) in four different synthetic networks.}
\label{figps}
\end{figure}

For generating the plots, we picked an arbitrary node from each network. With the assumption to estimate betweenness of this node, we assign probabilities to all the nodes in the network using EDDBM model, optimal sampling model (Opt), and DBM. Each network consists a large number of nodes. To draw a clear plot, we randomly picked 100 nodes and plotted probabilities assigned by the sampling models mentioned above for only these 100 randomly selected nodes. The x-axis represents the 100 chosen nodes, and the y-axis represents the probabilities assigned by DBM, EDDBM, and the optimal model (Opt). In the first three plots in Figure~\ref{figps}, we sorted the randomly picked 100 nodes in descending order based on the optimal probabilities assigned to them before plotting. The last one plot in Figure~\ref{figps.4} is without such sorting process.

In the Figure~\ref{figps}, it is easy to observe that EDDBM is much better than DBM. EDDBM generates probabilities very close to the optimal probabilities. In the Figure~\ref{figps.4}, we note that the plot of probabilities by EDDBM achieves similar characteristic peaks as the plot of optimal probabilities get. The analysis of plots infers that, unlike DBM, EDDBM identifies the nodes with high contribution and assigns significant probabilities to them. EDDBM also focuses on the nodes contributing very less and tries to assign smaller probabilities to them which was not well handled by DBM.

\subsubsection{Average Error and Efficiency vs Size of Graphs ($n$) \label{errdec}}
In this section, we plot the average error in the computation of betweenness score and the average efficiency in ordering the nodes based on the betweenness scores in a graph in respect of the order (number of nodes) of graphs. We generated graphs with $n=100$ to $n=1000$ with a step of $100$. For each $n$, we generated $5$ graphs and averaged the average error and efficiency over all the $5$ graphs. Figure~\ref{p-size} contains the plots. The plot in Figure~\ref{p-size-e} depicts the change in the mean error and plot in Figure~\ref{p-size-o} represents the change in the mean efficiency while changing the size of graphs, but keeping average degree as a constant. We plotted the results for ER graphs with average degree = \{3, 5, 10\} and for BA graphs with average degree = \{4, 6, 10\}. From the plots, we can infer that the mean error in the computation of betweenness score decreases and the mean efficiency in ordering the nodes based on betweenness score increases with an increase in the size of graphs.

\begin{figure}[H]
\centering
\subfigure[Average Error vs Size of Graphs]{%
\includegraphics[width=.75\textwidth]{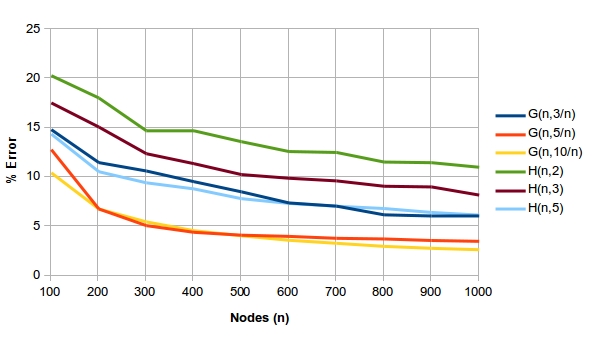}
\label{p-size-e}}
\subfigure[Efficiency vs Size of Graphs]{%
\includegraphics[width=.75\textwidth]{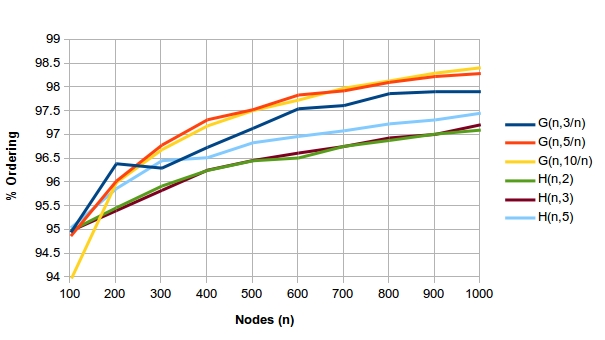}
\label{p-size-o}}
\caption{Average Error and Efficiency vs Size of Synthetic (ER and BA) Graphs}
\label{p-size}
\end{figure}

\subsubsection{Average Error and Efficiency vs Number of Sampled Nodes ($T$)\label{samples_r}}
In this section, we plot the average error in the computation of betweenness centrality using EDDBM and efficiency in ordering the nodes based on betweenness score using Algorithm~\ref{al2}, when the number of sampled nodes (no of iterations) were $T=X$. These plots were drawn to inspect what value of $T$ that can suffice for a good result, i.e., betweenness estimation with less error and betweenness-ordering with high efficiency. Figure~\ref{iter-real} is the plots of change in the average error versus $T$ and change in the average efficiency versus $T$ on the considered real-world graphs. Plots on the considered synthetic graphs are available in the appendix due to the limitation on the number of figures. 
\begin{figure}[H]
\centering
\subfigure[Average Error]{%
\includegraphics[width=.75\columnwidth]{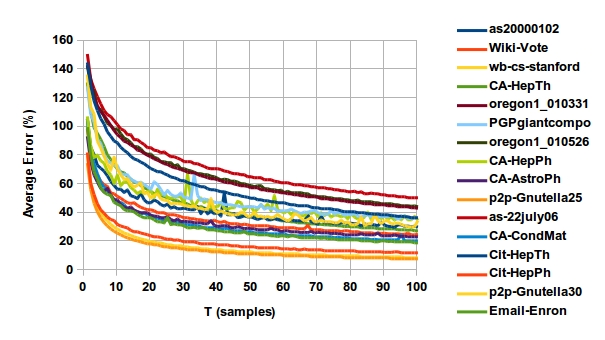}
\label{iter-order-er}}
\subfigure[Average Efficiency]{%
\includegraphics[width=.75\columnwidth]{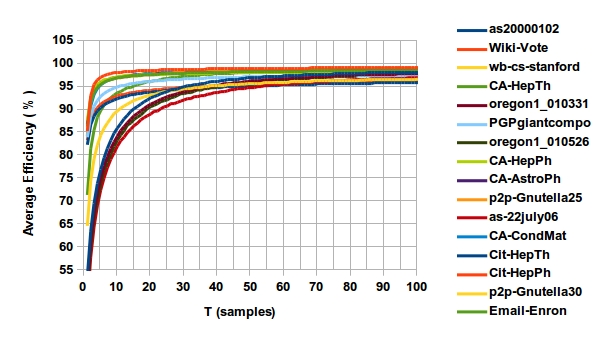}
\label{iter-order-er}}
\caption{Iterative performance of our approach on considered real-world networks.}
\label{iter-real}
\end{figure}

We start all the plots in this section from $X=1$. In the plots in this section, the average error reduces and average efficiency increases very sharply when $X$ varies from $1$ to $15$ or $25$. After $X=25$ there is a very small change in the average error and efficiency. Due to our focus on a quick betweenness-ordering heuristic, we concentrate on the iterative mean efficiency(ordering) performance plot on real-world networks given in Figure~\ref{iter-real}. It is notable that at $T$ = 25, on all considered real-world graphs, the average efficiency reaches beyond 90\% and for most of them even beyond 95\%. Thus, we set \textbf{$T$ = 25} to achieve experimental results in this paper. By reducing $T$ to a constant, one can suspect that the error might increase in larger graphs, but in big networks, our model performs much better. The reason is in the previous section that the average error decreases and efficiency increases with an increase in the number of nodes which neutralizes the effect of the increase in the error by keeping $T$ as constant. Similar results have been observed in the considered synthetic graphs and are available in the appendix.

\begin{figure}[htb]
\centering
\subfigure[On synthetic networks]{%
\includegraphics[width=.7\columnwidth]{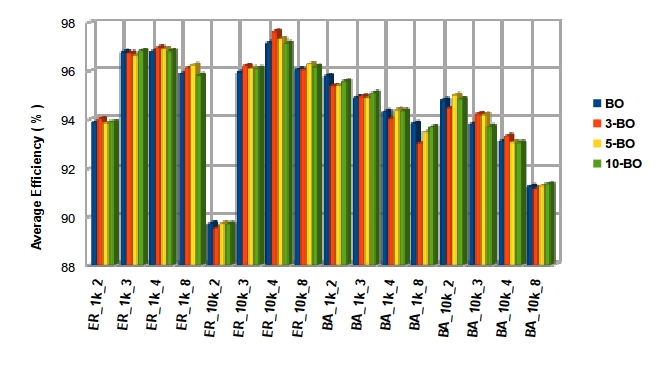}
\label{k-s}}
\;
\subfigure[On real-world networks]{%
\includegraphics[width=.7\columnwidth]{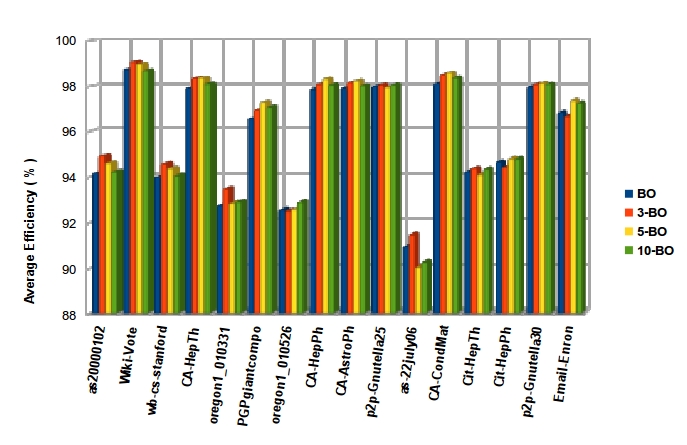}
\label{k-r}}
\caption{Average $k$-betweenness-ordering efficiency for different values of $k$ }
\label{k-fig}
\end{figure} 
\subsubsection{Average efficiency in $k$-betweenness-ordering vs $k$}
In this section, we plot the average efficiency of BOLT for ordering $k$ nodes in respect of the considered synthetic and real-world graphs. We calculate the average efficiency for $k$-betweenness-ordering as following. In a given graph, we randomly pick a set of $k$ nodes and calculate the efficiency in ordering these $k$ nodes based on the formula given in section \ref{eff}. We do this for 1000 iterations and take the mean of the efficiency in these 1000 iterations, and call it average efficiency.

The plots in Figure~\ref{k-fig} show the average efficiency of the extension of Algorithm~\ref{al2} for $k$-betweenness-ordering on various synthetic networks (Figure~\ref{k-s}) and real-world networks (Figure~\ref{k-r}). Label BO, denotes the efficiency of betweenness-ordering. Labels 3-BO, 5-BO, and 10-BO denotes the average efficiency of BOLT approach for $k$-betweenness-ordering on $k=3,5,10$ respectively. Plots show that the performance of the extended Algorithm~\ref{al2} is nearly same for $k$-betweenness-ordering as the performance of Algorithm~\ref{al2} for the betweenness-ordering (ordering of two nodes).  

\subsubsection{Correlation in ordering / Average efficiency}
In this section, we plot the average \textit{Spearman rank correlation (rho)} between the results produced by Algorithm~\ref{al2} and the other node-sampling based approaches on synthetic and real-world graphs. Rho is a standard ranking correlation that measures the similarity between the ordering of two ranking algorithms. The plots are in Figure~\ref{cor}.

Correlation of BOLT is very high (very close to 1) for almost all networks. BOLT outperforms all node sampling based algorithms. In very dense networks, 2-BC sometimes produces a better result than BOLT, but it should be noted that the difference in correlations are minute even when 2-BC takes a lot more time than BOLT. In sparse networks, BOLT produces much better results than 2-BC in a smaller amount of time. 

% We have plotted the average efficiency for $n=50$ to $n=200$ with a step of 10. For each $n$, we generated $10$ graphs and then averaged the efficiency over $10$ values. Efficiency exhibits the fraction of correct comparisons over all possible comparisons in a graph and can be computed by the formula given in section \ref{eff}. These plots investigate the average performance of our model for betweenness-ordering of two nodes. We plotted the efficiency of DBM and the uniform sampling model. Our model performed the best out of the considered models. We also plotted the correlation (EDDBM\_C) between the betweenness ranking of vertices by our model and their exact betweenness rank to evaluate the accuracy of ordering. Ordering all nodes by computing betweenness score for each node by Algorithm 1 and then sorting is very expensive than exactly computing the betweenness scores using Brandes's algorithm for all nodes and then sorting. The correlation plot is just to understand the accuracy of Algorithm 1, when used with EDDBM. As we can see in the plots, the efficiency of our model for betweeness-ordering in even the small sized graphs is very high (greater than $0.90$) and keeps increasing with the size of graph. Even for graph with size $200$, it reaches very close to $0.95$ which will further increase in large sized graphs. The correlation between the ordering achieved by our algorithm and the exact ordering is also very high and reached very close to 1 for graphs of size just $200$.

In next section, we will discuss the betweenness estimation and betweenness-ordering results achieved for the considered synthetic networks and some of the real-world networks.

\subsection{Average Error and Efficiency in Graphs} Here, we discuss and compare the results obtained by BOLT and other competitive algorithms that are mentioned in section \ref{comalg} on considered synthetic networks and several real-world networks. %The tables are available in the appendix as supplementary material.  

\begin{figure}[H]
\centering
\subfigure[On synthetic networks]{%
\includegraphics[width=.7\columnwidth]{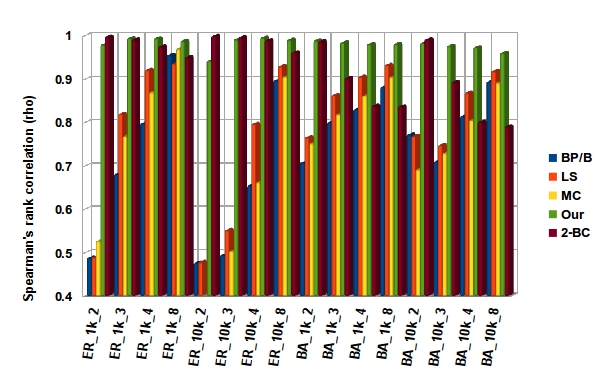}
\label{cor_syn}}
\;
\subfigure[On real-world networks]{%
\includegraphics[width=.7\columnwidth]{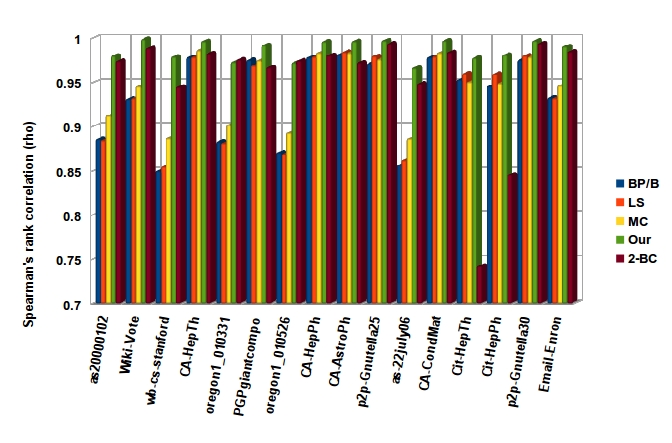}
\label{cor_real}}
\caption{Average spearman rank correlation (rho) of various algorithms in the considered synthetic and real-world networks. }
\label{cor}
\end{figure}
\subsubsection{Average Error and Efficiency in Synthetic Graphs}
In this section, we analyze the results over synthetic graphs mentioned in section \ref{syn_n}. The average of average error is 88.138, 58.063, 67.719, \textbf{12.564}, and 86.548 percentage and the standard deviation in the mean error is 17.235, 16.894, 19.243, \textbf{8.399}, 20.646 by BP/B, LS, MC, \textbf{BOLT}, and 2-BC respectively. The average of average efficiency is 78.913, 81.666, 79.455,    \textbf{94.684}, and    87.383 percentage and the standard deviation in the average efficiency is 7.499, 7.669, 7.945, \textbf{2.046}, 10.344 by BP/B, LS, MC, \textbf{BOLT}, and 2-BC respectively. It is easy to observe that BOLT outperforms all mentioned sampling-based algorithms for both, estimation and ordering by a huge margin. The standard deviation in the performance (average error and efficiency) of BOLT is comparatively smaller than other algorithms. It shows that BOLT's performance is consistent. The 2-BC algorithm performs better for ordering, but not for estimation in very dense graph. It is because of very small average distance between nodes. But, it should be noted that 2-BC takes several folds more time that BOLT. In moderately denser or sparser graphs, BOLT even outperforms 2-BC by a significant margin. Our model performs relatively better in dense graphs than in sparse graphs. 
\begin{table}[htbp]
\centering
\caption{Average error and average efficiency in synthetic networks when $T=25$ }
\resizebox{\columnwidth}{!}
{
\begin{tabular}{|l||c|c|c|c|c||c|c|c|c|c|}
\hline
\multicolumn{1}{|c|}{\textbf{Instance }}&\multicolumn{5}{|c|}{\textbf{Average Error }}&\multicolumn{5}{|c|}{\textbf{Average Efficiency }}\\ \hline
\textbf{Instance } & \textbf{BP/B} & \textbf{LS} & \textbf{MC} & \textbf{BOLT} & \textbf{2-BC}& \textbf{BP/B} & \textbf{LS} & \textbf{MC} & \textbf{BOLT} & \textbf{2-BC}\\ \hline \hline
\textbf{ER\_1k\_2} & 66.291 & 50.603 & 43.433 & \textbf{5.425} & 53.982 & 66.428 & 67.685 & 68.193 & \textbf{93.927} & 98.346 \\ 
\textbf{ER\_1k\_3} & 70.015 & 37.699 & 44.585 & \textbf{5.345} & 95.880 & 74.186 & 80.938 & 78.054 & \textbf{96.828} & 95.481 \\ 
\textbf{ER\_1k\_4} & 67.505 & 42.140 & 44.097 & \textbf{7.505} & 99.033  & 80.398 & 88.441 & 83.943 & \textbf{96.854} & 90.025\\ 
\textbf{ER\_1k\_5} & 63.252 & 62.835 & 37.646 & \textbf{18.257} & 99.411 & 90.980 & 89.678 & 92.646 & \textbf{95.954} & 83.221\\% \hline
\textbf{ER\_10k\_2} & 113.418 & 90.286 & 82.129 & \textbf{5.281} & 53.826  & 67.173 & 67.310 & 63.935 & \textbf{89.760} & 99.295\\ 
\textbf{ER\_10k\_3} & 89.656 & 52.754 & 75.481 & \textbf{4.381} & 98.045  & 67.504 & 69.518 & 67.817 & \textbf{96.023} & 97.256\\ 
\textbf{ER\_10k\_4} & 92.911 & 47.576 & 72.433 & \textbf{4.793} & 99.699 & 73.776 & 79.966 & 74.018 & \textbf{97.215} & 92.901 \\ 
\textbf{ER\_10k\_5} & 83.714 & 65.528 & 66.032 & \textbf{14.084} & 99.933  & 87.716 & 88.738 & 87.845 & \textbf{96.128} & 82.815\\% \hline
\textbf{BA\_1k\_2} & 83.915 & 64.922 & 56.750 & \textbf{7.987} & 52.232 & 77.803 & 79.644 & 79.024 & \textbf{95.855} & 95.583 \\ 
\textbf{BA\_1k\_3} & 82.078 & 47.664 & 61.647 & \textbf{12.487} & 93.155 & 80.854 & 84.555 & 81.845 & \textbf{94.965} & 86.927\\ 
\textbf{BA\_1k\_4} & 84.845 & 45.439 & 63.220 & \textbf{17.155} & 97.567 & 82.889 & 87.979 & 84.245 & \textbf{94.372} & 79.543\\ 
\textbf{BA\_1k\_8} & 82.732 & 48.137 & 66.511 & \textbf{23.427} & 98.868 & 86.115 & 89.199 & 86.731 & \textbf{93.889} & 70.505\\ %\hline
\textbf{BA\_10k\_2} & 123.117 & 100.389 & 99.124 & \textbf{8.208} & 48.404 & 80.993 & 81.583 & 76.898 & \textbf{94.873} & 96.455\\ 
\textbf{BA\_10k\_3} & 100.749 & 65.610 & 90.002 & \textbf{13.017} & 95.667  & 77.458 & 78.496 & 77.827 & \textbf{93.859} & 86.472 \\ 
\textbf{BA\_10k\_4} & 103.962 & 54.955 & 89.731 & \textbf{18.844} & 99.179 & 81.785 & 84.967 & 81.571 & \textbf{93.161} & 79.137\\ 
\textbf{BA\_10k\_8} & 102.052 & 52.474 & 90.697 & \textbf{34.829} & 99.892 & 86.549 & 87.962 & 86.687 & \textbf{91.289} & 64.161\\ \hline
\textbf{Average Error/Efficiency}&88.138 & 58.063	&67.720	&\textbf{12.564} & 86.548 & 78.913 & 81.666 & 79.455 & \textbf{94.685} & 87.383\\ \hline
\textbf{Standard Deviation}&17.235	&16.894	&19.243	&\textbf{8.399} & 20.646 & 7.499 & 7.669 & 7.945 & \textbf{2.046} & 10.344\\ \hline

\end{tabular}}
\label{r-tab1}
\end{table}

The results are summarized in Table \ref{r-tab1}. The first column in the table consists label of the network instance. The next five columns in Table \ref{r-tab1} are the estimation performance, average error in estimating betweenness, by various algorithms. The last five columns are the ordering performance results, average efficiency in ordering two nodes based on betweenness score, by different algorithms.

\subsubsection{Average Error and Efficiency in Real-world Graphs \label{real} }
This section presents and discusses the simulation results on various real networks considered in section \ref{net}. After extracting the networks, we converted the networks into unweighted undirected networks, if required. Then we removed multi-edges, self-loops, and isolated nodes if existing. The average of average error is 109.469,    69.210,    98.603,    \textbf{47.063}, and 95.289 percentage and the standard deviation in average error is 13.845, 18.412, 12.606, \textbf{19.843}, 8.859 by BP/B, LS, MC, \textbf{BOLT}, and 2-BC respectively. The average of average efficiency is 84.803, 84.841, 87.294, \textbf{95.861}, 92.317 percentage and the standard deviation in average efficiency is 10.095, 10.150, 8.065, \textbf{2.465}, 4.747 by BP/B, LS, MC, \textbf{BOLT}, and 2-BC respectively. BOLT again superseded all other considered algorithms for betweenness estimation. Although the standard deviation of the error for estimating betweenness score is higher than other algorithms, still the average error is relatively less. The large standard deviation is due to different nature, structure and size of graphs. But, the standard deviation in the efficiency for ordering nodes using BOLT is very less which shows a consistently good performance of BOLT in ordering nodes based on betweenness score. The estimation results can be improved by increasing the value of $T$ that is considered 25 for all computations.

\begin{table}[htbp]
\centering
\caption{Average error and average efficiency in real-world networks when $T=25$ }
\resizebox{\columnwidth}{!}
{
\begin{tabular}{|l||c|c|c|c|c||c|c|c|c|c|}
\hline
\multicolumn{1}{|c|}{\textbf{Instance }}&\multicolumn{5}{|c|}{\textbf{Average Error }}&\multicolumn{5}{|c|}{\textbf{Average Efficiency }}\\ \hline
\textbf{Instance } & \textbf{BP/B} & \textbf{LS} & \textbf{MC} & \textbf{BOLT} & \textbf{2-BC}& \textbf{BP/B} & \textbf{LS} & \textbf{MC} & \textbf{BOLT} & \textbf{2-BC}\\ \hline \hline
\textbf{as20000102} & 110.466 & 70.932 & 100.000 & \textbf{66.265} & 98.683 & 76.398 & 76.336 & 80.574 & \textbf{94.224} & 93.390\\ 
\textbf{Wiki-Vote} & 117.143 & 78.948 & 104.930 & \textbf{22.355} & 93.650 & 81.246 & 81.296 & 84.614 & \textbf{98.779} & 96.037  \\
\textbf{wb-cs-stanford} & 135.742 & 126.957 & 123.029 & \textbf{52.811} & 62.970 & 63.901 & 63.685 & 71.806 & \textbf{94.041} & 91.597 \\
\textbf{CA-HepTh} & 98.828 & 57.641 & 77.841 & \textbf{35.233} & 96.075  & 93.160 & 92.910 & 95.442 & \textbf{97.951} & 94.732 \\ 
\textbf{oregon1\_010331} & 134.948 & 76.411 & 109.810 & \textbf{74.394} & 98.711 & 75.061 & 74.862 & 78.706 & \textbf{92.793} & 93.659 \\ 
\textbf{PGPgiantcompo} & 96.975 & 52.619 & 93.898 & \textbf{54.883} & 96.212 & 93.092 & 93.064 & 94.650 & \textbf{97.886} & 94.629\\ 
\textbf{oregon1\_010526} & 115.421 & 77.324 & 115.177 & \textbf{74.300} & 98.722 & 72.859 & 72.453 & 77.153 & \textbf{92.594} & 93.461 \\ 
\textbf{CA-HepPh} & 95.320 & 55.769 & 86.426 & \textbf{50.714} & 96.254 & 93.092 & 93.064 & 94.650 & \textbf{97.886} & 94.629 \\ 
\textbf{CA-AstroPh} & 98.069 & 54.855 & 91.558 & \textbf{36.938} & 97.947 & 93.778 & 94.178 & 94.496 & \textbf{97.934} & 93.824\\ 
\textbf{p2p-Gnutella25} & 102.861 & 63.462 & 83.838 & \textbf{16.760} & 99.962 & 92.922 & 93.918 & 93.741 & \textbf{98.019} & 94.969\\ 
\textbf{as-22july06} & 131.187 & 82.954 & 115.373 & \textbf{80.428} & 98.965 & 70.131 & 70.965 & 75.438 & \textbf{90.969} & 90.567\\ 
\textbf{CA-CondMat} & 95.170 & 53.324 & 89.418 & \textbf{34.532} & 96.333 & 93.234 & 93.131 & 94.671 & \textbf{98.147} & 95.144 \\
\textbf{Cit-HepTh} & 104.839 & 59.056 & 96.572 & \textbf{44.896} & 98.253 & 91.014 & 91.204 & 91.043 & \textbf{94.270} & 78.306\\ 
\textbf{Cit-HepPh} & 109.118 & 58.523 & 95.126 & \textbf{39.311} & 98.949 & 90.171 & 91.209 & 90.744 & \textbf{94.695} & 83.741\\ 
\textbf{p2p-Gnutella30} & 100.685 & 64.435 & 90.594 & \textbf{18.407} & 99.957 & 93.428 & 93.980 & 94.120 & \textbf{97.989} & 95.347\\ 
\textbf{Email-Enron} & 104.747 & 74.159 & 104.052 & \textbf{50.781} & 92.997 & 84.086 & 83.866 & 87.182 & \textbf{96.876} & 95.275\\ \hline
\textbf{Average Error/Efficiency}&109.470	&69.211	&98.603	&\textbf{47.063}&	95.290 &84.803	&84.841	&87.295	&\textbf{95.861}	&92.317\\ \hline
\textbf{Standard Deviation}&13.845 &	18.412 &	12.606 &	\textbf{19.843}&	8.859 &10.095	&10.150	&8.065	&\textbf{2.465}	&4.747\\ \hline

\end{tabular}}
\label{r-tab3}
\end{table}

For evaluating the more general performance of BOLT, in addition to the 16 networks from section \ref{net}, we have considered 54 more networks that cover most of the different networks available at \cite{snap,vlado} of size (100, 100k). The betweenness-ordering results for $T=25$ and $T=50$ are calculated on all the networks mentioned in section \ref{net} and the other 54 picked networks. We achieve average efficiency 96.586 and 97.632 percentage with a standard deviation of 2.384 and 1.447 percentage for $T$ = 25, 50 respectively ignoring three special networks. %The details of the other 54 networks and ordering results on all real-world networks are provided in the appendix as supplementary material.
This section presents and discusses the simulation results on various real-world networks. After extracting the networks, we converted the networks into unweighted undirected networks, if required. Then we removed multi-edges, self-loops, and isolated nodes if existing. We summarize the obtained results in the Table \ref{r-tab3}. The columns are in similar order as in Table \ref{r-tab1} respectively. Here, Table \ref{r-tab3} compiles the performance of estimation and ordering results.

Further, we present and discuss the ordering results by BOLT on more real-world networks. We compile the obtained results in Table \ref{r-tab5}. The first five columns of Table \ref{r-tab5} contain serial number, the name of the network instances, the size of networks ($n$), the average degree of the nodes (Avg. Deg.), the number of nodes with zero betweenness score (Z-BC) respectively. Next column contains the average efficiency of BOLT for ordering when $T$ = 25 and all $\binom{n}{2}$ pairs are considered for calculating the efficiency. Next column contains the average efficiency of BOLT for ordering when $T$ = 25 and only the pairs that consist at least one node with nonzero betweenness scores, are considered for computing the efficiency. The next two columns are same as the columns 6-7 except the efficiency is calculated when $T$=50 (50 samples) is set in our algorithm. The picked networks almost cover the different networks data-sets available at \cite{snap,vlado} of size (100, 100k). The results show that the efficiency of our algorithm is very high and very close to the exact ordering for only a constant number of samples.

On three of the real-world networks that were picked, the ordering results were not good. The results are present in the last three rows of Table \ref{r-tab5}. The reason for this bad performance is a unique property of these networks. In these networks, most of the nodes share same betweenness score. Our method probabilistically estimates the betweenness score. The defined efficiency formula considers when two nodes share same actual betweenness score, they get same rank and the efficiency only increases if the considered estimation algorithm also assigns both of the nodes exactly same score. But, by any probabilistic algorithm, even if it is very efficient, due to the probabilistic nature, it is very less probable (nearly impossible) that it will be able to assign the same score. Thus, BOLT gets lower efficiency though the average errors in the estimation were small. 
\begin{comment}
\begin{center}
\begin{table}[htbp]
\caption{Average Efficiency (in \%) of BOLT on three special real-world networks setting  $T$ = 25, 50}
\resizebox{1\columnwidth}{!}
{
\begin{tabular}{|c|l|c|c|c|c|c|c|c|}
\hline
\textbf{} & \textbf{Instance name} & \textbf{n} & \textbf{Avg. Deg.} & \textbf{Z-BC} & \textbf{Ordering\_25} & \textbf{Ordering\_25 nz} & \textbf{Ordering\_50} & \textbf{Ordering\_50 nz} \\ \hline
1 & \textbf{GD06\_theory} & 101 & 3.762 & 0 & 20.875 & 20.875 & 20.222 & 20.222 \\ \hline
2 & \textbf{GD96\_b} & 111 & 3.477 & 0 & 84.842 & 84.842 & 85.789 & 85.789 \\ \hline
3 & \textbf{GD98\_c} & 112 & 3.000 & 0 & 62.825 & 62.825 & 66.599 & 66.599 \\ \hline
\end{tabular}}
\label{r-tab6}
\end{table}
\end{center} 
\end{comment}
\begin{center}
\begin{table}[htbp]
\centering
\caption{Average Efficiency (in \%) of BOLT on real-world networks setting  $T$ = 25, 50}
\resizebox{.75\columnwidth}{!}
{
\begin{tabular}{|c|l|c|c|c|c|c|c|c|}
\hline
\textbf{S.N.} & \textbf{Instance name} & \textbf{n} & \textbf{Avg. Deg.} & \textbf{Z-BC} & \textbf{Ordering\_25} & \textbf{Ordering\_25 nz} & \textbf{Ordering\_50} & \textbf{Ordering\_50 nz} \\ \hline
\textbf{1} & \textbf{as20000102} & 6474 & 3.884 & 3682 & 96.092 & 94.224 & 97.949 & 96.968 \\ \hline
\textbf{2} & \textbf{Wiki-Vote} & 7115 & 28.324 & 2517 & 98.932 & 98.779 & 99.174 & 99.056 \\ \hline
\textbf{3} & \textbf{wb-cs-stanford} & 9435 & 5.814 & 2814 & 94.571 & 94.041 & 96.186 & 95.814 \\ \hline
\textbf{4} & \textbf{CA-HepTh} & 9877 & 5.259 & 5291 & 98.539 & 97.951 & 98.856 & 98.396 \\ \hline
\textbf{5} & \textbf{oregon1\_010331} & 10670 & 4.124 & 6285 & 95.293 & 92.793 & 97.643 & 96.390 \\ \hline
\textbf{6} & \textbf{PGPgiantcompo} & 10680 & 4.554 & 5663 & 97.560 & 96.606 & 98.011 & 97.233 \\ \hline
\textbf{7} & \textbf{oregon1\_010526} & 11174 & 4.190 & 6520 & 95.115 & 92.594 & 97.407 & 96.068 \\ \hline
\textbf{8} & \textbf{CA-HepPh} & 12008 & 19.735 & 6304 & 98.468 & 97.886 & 98.672 & 98.166 \\ \hline
\textbf{9} & \textbf{CA-AstroPh} & 18772 & 21.101 & 8446 & 98.352 & 97.934 & 98.674 & 98.338 \\ \hline
\textbf{10} & \textbf{p2p-Gnutella25} & 22687 & 4.823 & 9348 & 98.355 & 98.019 & 98.809 & 98.566 \\ \hline
\textbf{11} & \textbf{as-22july06} & 22963 & 4.219 & 11927 & 93.405 & 90.969 & 96.326 & 94.969 \\ \hline
\textbf{12} & \textbf{CA-CondMat} & 23133 & 8.078 & 12635 & 98.700 & 98.147 & 99.000 & 98.575 \\ \hline
\textbf{13} & \textbf{Cit-HepTh} & 27770 & 25.372 & 2345 & 94.311 & 94.270 & 95.286 & 95.253 \\ \hline
\textbf{14} & \textbf{Cit-HepPh} & 34546 & 24.366 & 2120 & 94.715 & 94.695 & 95.647 & 95.631 \\ \hline
\textbf{15} & \textbf{p2p-Gnutella30} & 36682 & 4.816 & 16531 & 98.397 & 97.989 & 98.831 & 98.533 \\ \hline
\textbf{16} & \textbf{Email-Enron} & 36692 & 10.020 & 23710 & 98.181 & 96.876 & 98.998 & 98.280 \\ \hline
\textbf{17} & \textbf{as19990829} & 103 & 4.641 & 43 & 98.835 & 98.593 & 99.101 & 98.915 \\ \hline
\textbf{18} & \textbf{facebook\_combined} & 4039 & 43.691 & 342 & 96.396 & 96.370 & 97.242 & 97.222 \\ \hline
\textbf{19} & \textbf{CA-GrQcNew} & 5242 & 5.526 & 3236 & 98.957 & 98.315 & 99.143 & 98.616 \\ \hline
\textbf{20} & \textbf{P2p-Gnutella04} & 10876 & 7.355 & 2484 & 97.634 & 97.504 & 98.280 & 98.186 \\ \hline
\textbf{21} & \textbf{oregon2\_010331} & 10900 & 5.721 & 6096 & 96.706 & 95.207 & 98.311 & 97.543 \\ \hline
\textbf{22} & \textbf{Oregon2\_010526} & 11461 & 5.712 & 6290 & 96.345 & 94.770 & 98.042 & 97.199 \\ \hline
\textbf{23} & \textbf{P2p-Gnutella24} & 26518 & 4.930 & 11014 & 98.214 & 97.842 & 98.706 & 98.436 \\ \hline
\textbf{24} & \textbf{P2p-Gnutella31} & 62586 & 4.726 & 28829 & 98.241 & 97.768 & 98.708 & 98.360 \\ \hline
\textbf{25} & \textbf{Soc-Epinions1} & 75879 & 10.694 & 41048 & 98.006 & 97.181 & 98.529 & 97.921 \\ \hline
\textbf{26} & \textbf{Slashdot0811} & 77360 & 12.130 & 30164 & 97.575 & 97.140 & 98.179 & 97.853 \\ \hline
\textbf{27} & \textbf{Slashdot0902} & 82168 & 12.273 & 30855 & 97.493 & 97.081 & 98.151 & 97.848 \\ \hline
\textbf{28} & \textbf{GD99\_c} & 105 & 2.286 & 36 & 95.190 & 94.563 & 95.769 & 95.217 \\ \hline
\textbf{29} & \textbf{GD98\_b} & 121 & 2.182 & 75 & 97.702 & 96.281 & 98.014 & 96.785 \\ \hline
\textbf{30} & \textbf{Journals} & 124 & 96.323 & 0 & 94.443 & 94.443 & 95.951 & 95.951 \\ \hline
\textbf{31} & \textbf{GD96\_d} & 180 & 2.533 & 58 & 92.007 & 91.094 & 92.831 & 92.011 \\ \hline
\textbf{32} & \textbf{GD01\_a} & 311 & 4.116 & 121 & 97.743 & 97.343 & 98.169 & 97.845 \\ \hline
\textbf{33} & \textbf{USAir97} & 332 & 12.807 & 135 & 98.901 & 98.685 & 99.189 & 99.029 \\ \hline
\textbf{34} & \textbf{GD00\_a} & 352 & 2.182 & 177 & 98.588 & 98.112 & 98.835 & 98.442 \\ \hline
\textbf{35} & \textbf{SmallW} & 396 & 5.020 & 228 & 98.522 & 97.791 & 99.244 & 98.871 \\ \hline
\textbf{36} & \textbf{GD97\_c} & 452 & 2.035 & 395 & 99.893 & 99.548 & 99.920 & 99.661 \\ \hline
\textbf{37} & \textbf{Erdos971} & 472 & 5.568 & 168 & 97.643 & 97.303 & 98.130 & 97.860 \\ \hline
\textbf{38} & \textbf{Erdos981} & 485 & 5.695 & 171 & 97.690 & 97.363 & 98.151 & 97.890 \\ \hline
\textbf{39} & \textbf{Erdos991} & 492 & 5.760 & 173 & 97.622 & 97.287 & 98.213 & 97.962 \\ \hline
\textbf{40} & \textbf{GD00\_c} & 638 & 3.197 & 273 & 97.729 & 97.221 & 98.391 & 98.031 \\ \hline
\textbf{41} & \textbf{GD01\_Acap} & 953 & 1.343 & 763 & 99.595 & 98.871 & 99.723 & 99.228 \\ \hline
\textbf{42} & \textbf{Roget} & 1022 & 7.139 & 86 & 95.911 & 95.882 & 96.894 & 96.872 \\ \hline
\textbf{43} & \textbf{SmaGri} & 1059 & 9.284 & 251 & 97.350 & 97.193 & 98.118 & 98.006 \\ \hline
\textbf{44} & \textbf{GD96\_a} & 1096 & 3.060 & 4 & 91.707 & 91.707 & 93.822 & 93.822 \\ \hline
\textbf{45} & \textbf{GD06\_Java} & 1538 & 10.165 & 394 & 95.931 & 95.645 & 97.439 & 97.259 \\ \hline
\textbf{46} & \textbf{Csphd} & 1882 & 1.849 & 1306 & 99.481 & 99.000 & 99.548 & 99.129 \\ \hline
\textbf{47} & \textbf{Yeast} & 2361 & 5.630 & 952 & 98.288 & 97.956 & 98.742 & 98.498 \\ \hline
\textbf{48} & \textbf{ODLIS} & 2909 & 11.260 & 567 & 96.305 & 96.160 & 97.321 & 97.215 \\ \hline
\textbf{49} & \textbf{SciMet} & 3084 & 6.744 & 894 & 97.687 & 97.475 & 98.306 & 98.151 \\ \hline
\textbf{50} & \textbf{Kohonen} & 4470 & 5.690 & 1671 & 96.158 & 95.534 & 97.365 & 96.937 \\ \hline
\textbf{51} & \textbf{EPA} & 4772 & 3.734 & 2716 & 98.679 & 98.047 & 99.125 & 98.706 \\ \hline
\textbf{52} & \textbf{UspowerGrid} & 4941 & 2.669 & 1447 & 95.819 & 95.427 & 96.576 & 96.255 \\ \hline
\textbf{53} & \textbf{Erdos972} & 5488 & 2.582 & 4321 & 99.713 & 99.246 & 99.796 & 99.464 \\ \hline
\textbf{54} & \textbf{Erdos982} & 5822 & 2.533 & 4647 & 99.718 & 99.222 & 99.812 & 99.483 \\ \hline
\textbf{55} & \textbf{Erdos992} & 6100 & 2.464 & 4911 & 99.713 & 99.184 & 99.822 & 99.495 \\ \hline
\textbf{56} & \textbf{Zewail} & 6752 & 16.049 & 827 & 96.440 & 96.385 & 97.203 & 97.161 \\ \hline
\textbf{57} & \textbf{Erdos02} & 6927 & 2.446 & 5640 & 95.243 & 85.887 & 98.308 & 94.981 \\ \hline
\textbf{58} & \textbf{Geom} & 7343 & 3.241 & 5677 & 99.379 & 98.456 & 99.678 & 99.200 \\ \hline
\textbf{59} & \textbf{EVA} & 8497 & 1.580 & 7656 & 99.847 & 99.188 & 99.873 & 99.323 \\ \hline
\textbf{60} & \textbf{Lederberg} & 8843 & 9.393 & 2417 & 96.111 & 95.797 & 97.148 & 96.918 \\ \hline
\textbf{61} & \textbf{California} & 9664 & 3.305 & 5958 & 98.866 & 98.170 & 99.270 & 98.823 \\ \hline
\textbf{62} & \textbf{FA} & 10617 & 12.016 & 4235 & 98.324 & 98.007 & 98.715 & 98.472 \\ \hline
\textbf{63} & \textbf{foldoc} & 13356 & 13.697 & 1 & 94.982 & 94.982 & 96.410 & 96.410 \\ \hline
\textbf{64} & \textbf{EAT\_RS} & 23219 & 26.266 & 2860 & 97.376 & 97.336 & 98.072 & 98.043 \\ \hline
\textbf{65} & \textbf{EAT\_SR} & 23219 & 26.266 & 2861 & 97.364 & 97.324 & 98.071 & 98.041 \\ \hline
\textbf{66} & \textbf{Dictionary28} & 52652 & 3.382 & 27847 & 98.498 & 97.915 & 98.778 & 98.303 \\ \hline
\textbf{67} & \textbf{Wordnet3} & 82670 & 2.913 & 45223 & 97.687 & 96.700 & 98.070 & 97.246 \\ \hline \hline
&\textbf{Average Efficiency}&&&& \textbf{97.302} & \textbf{96.586}& \textbf{98.070} & \textbf{97.632}\\ \hline
&\textbf{Standard Deviation}&&&&\textbf{1.833}& \textbf{2.384}& \textbf{1.380}& \textbf{1.447}\\
%\hline &&&&&&&&\\ \hline \hline
\hline \hline
1 & \textbf{GD06\_theory} & 101 & 3.762 & 0 & 20.875 & 20.875 & 20.222 & 20.222 \\ \hline
2 & \textbf{GD96\_b} & 111 & 3.477 & 0 & 84.842 & 84.842 & 85.789 & 85.789 \\ \hline
3 & \textbf{GD98\_c} & 112 & 3.000 & 0 & 62.825 & 62.825 & 66.599 & 66.599 \\ \hline
\end{tabular}}
\label{r-tab5}
\end{table}
\end{center}

%\pagebreak
\section{Conclusion and Further Work}

In this paper, we coin a new problem called betweenness-ordering-problem and address its importance with real-world examples and provide a feasible and practical heuristic to solve it. According to our problem statement, betweenness-ordering problem refers to the ordering of two nodes. We extend the heuristic to solve the generic version of the betweenness-ordering problem for $k$ nodes when $k\ll$ total number of nodes. When $k$ is $O(n)$, our heuristic will take a lot of extra time and is not recommended to be used. The heuristic is partially based on the analysis of random $G(n,p)$ graphs which are distinct from the real-world graphs found in nature. Therefore, we perform an extensive testing of the proposed heuristic on a broad range of $70$ real-world networks from the SNAP dataset \cite{snap} to give an experimental evidence of the heuristic's worthiness. To the best of our knowledge, this is the first of its kind study addressing the ``ordering problem" in centrality measures. While our work is a first attempt to provide a solution to the centrality-ordering-problem for the betweenness measure, this should lead to the asking and answering of this question across several popular measures that have seen its applications in diverse areas.

\begin{itemize}
\item Our model performs very well on both real and synthetic networks. The formulation of EDDBM is based on the analysis of random graphs. Random graphs do not possess high clustering coefficient, and thus this model does not perform well on the graphs with high clustering coefficient. In highly clustered graphs, a better model is desirable. An interesting problem would be to tune BOLT so that the clustering has no effect on the results. 

\item Theoretically bound on the EDDBM's error in approximating the optimal sampling probabilities is still open. Coming up with a better model than EDDBM will increase the efficiency of BOLT for betweenness-ordering which is a potential future direction.

\item One can attempt to ask a similar question for the Pagerank-ordering-problem, Closeness-ordering-problem or any other centrality-ordering-problem. Any attempt to address these problems in the same spirit as our addressing the betweenness-ordering-problem would collectively be a significant contribution to applied sciences where centrality measures are being increasingly applied.
\end{itemize}

% use section* for acknowledgment
%\ifCLASSOPTIONcompsoc
%  % The Computer Society usually uses the plural form
%  \section*{Acknowledgments}
%\else
%  % regular IEEE prefers the singular form
\section*{Acknowledgment}
%\fi
The authors would like to thank the IIT Ropar HPC committee for providing the resources for performing experiments. They also would like to thank S.R.S. Iyengar and the Malgudi team at IIT Ropar for their comments to improve the presentation of the paper.

% Can use something like this to put references on a page
% by themselves when using endfloat and the captionsoff option.
%\ifCLASSOPTIONcaptionsoff
%  \newpage
%\fi

% trigger a \newpage just before the given reference
% number - used to balance the columns on the last page
% adjust value as needed - may need to be readjusted if
% the document is modified later
%\IEEEtriggeratref{8}
% The "triggered" command can be changed if desired:
%\IEEEtriggercmd{\enlargethispage{-5in}}

% references section

% can use a bibliography generated by BibTeX as a .bbl file
% BibTeX documentation can be easily obtained at:
% http://www.ctan.org/tex-archive/biblio/bibtex/contrib/doc/
% The IEEEtran BibTeX style support page is at:
% http://www.michaelshell.org/tex/ieeetran/bibtex/
%\bibliographystyle{IEEEtran}
% argument is your BibTeX string definitions and bibliography database(s)
%\bibliography{IEEEabrv,../bib/paper}
%
% <OR> manually copy in the resultant .bbl file
% set second argument of \begin to the number of references

% biography section

\bibliographystyle{plainnat}
\bibliography{ref1}
\pagebreak
\appendix

\section{Eccentricity Ordering}
We illustrate centrality ordering problem in the context of a centrality measure called the \textit{eccentricity} measure and give a simple approximation approach for eccentricity ordering. 

\begin{figure}[h]
\centering
\subfigure[1]{%
\includegraphics[width=.35\textwidth]{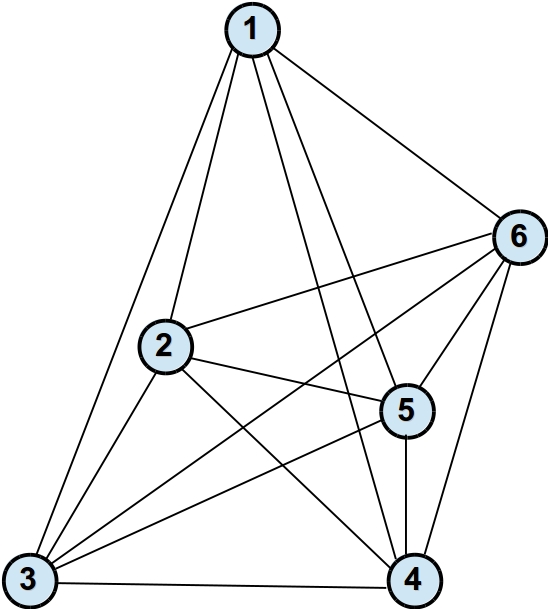}
\label{fig0.1}}
\quad
\subfigure[2]{%
\includegraphics[width=.35\textwidth]{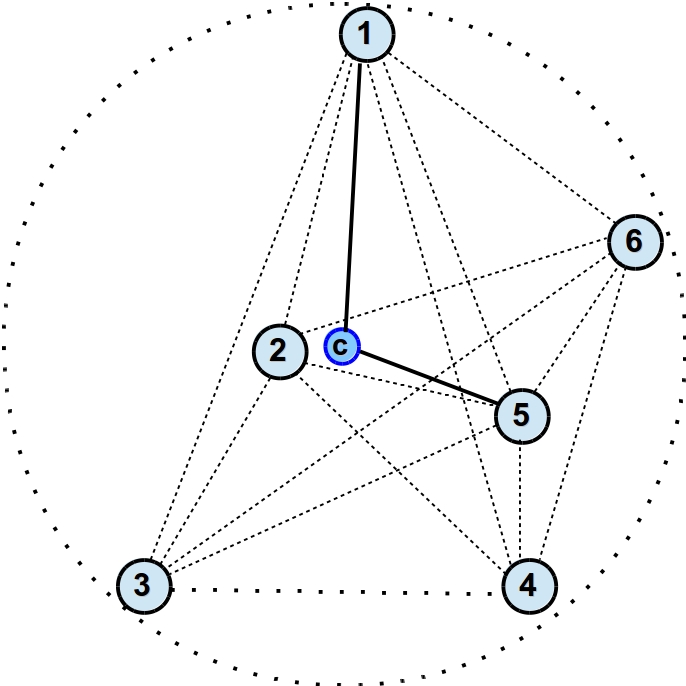}
\label{fig0.2}}
\caption{Eccentricity ordering in 2-D Euclidean plane}
\label{fig0}
\end{figure}  

Eccentricity of a node $v$\cite{Hage95} in a connected graph $G$ is defined as the shortest distance to the farthest node from $v$ in $G$. Center of a graph which is a solution to the \textit{facility location problems}, is calculated by picking the nodes with least eccentricity. Finding eccentricity of all nodes is as expensive as finding closeness, betweenness or stress centrality for all nodes in respect of time. Given a graph in the two dimensional euclidean space, if we were to solve the \emph{eccentricity ordering problem} of two nodes in that graph without computing the eccentricities, we would go about the following way: drawing a minimum circle (Disk) covering all the nodes is a very well known problem called \textit{smallest-circle problem} or \textit{minimum covering circle problem}. A linear ($O(n)$) time randomized algorithm by Welzl~\cite{Welzl} can find the smallest circle covering $n$ points on a 2-D euclidean plane. Once we find the smallest circle, an approximate solution to the eccentricity comparison problem is to compare the distance from the center of smallest circle to the nodes. If the nodes are evenly distributed in the smallest circle, then the node closer to the center of that smallest circle is likely to have smaller eccentricity and vice versa. Therefore, the eccentricity ordering problem can be estimated in linear time as opposed to finding it the conventional way by considering all possible distances from the given vertex to all other vertices. A theoretical bound on the ordering efficiency of the above scheme is still open. 
\section{•}
\subsection{Corroborating Experimental Evidence of Linear Time Running for Efficient Ordering}
In this section, we show experimentally that the expected time for betweenness-ordering heuristic is linear with the number of edges ($m$). If the average degree is constant, then ordering time also shown linear to the number of nodes ($n$). We have picked Gnutella-family of networks from \cite{snap}. This family comprises of 9 different snapshot of Gnutella peer-to-peer file sharing network from August 2002. The details of the networks and betweenness-ordering results on the networks are summarized in table \ref{tabto}. The columns in the table contain name of the network, number of nodes ($n$), average degree of nodes in the network(AVG.D.), average ordering time over 500 random pairs of nodes (Avg. time) in seconds, and average efficiency in ordering the 500 randomly picked pairs of nodes based on their betweenness score (Avg. Efficiency) respectively. The average betweenness-ordering  for all the networks in this network-family is above 97\% which is surely a high accuracy in very less ordering time.

\begin{table*}[htbp]
\centering
\caption{Average betweenness-ordering time and efficiency in Gnutella-family of networks from \cite{snap}}
\resizebox{.6\textwidth}{!}
{
\begin{tabular}{|l|r|r|r|r|c|}
\hline
Instance & \multicolumn{1}{l|}{n} & \multicolumn{1}{l|}{Avg. D.} & \multicolumn{1}{l|}{m} & \multicolumn{1}{l|}{Avg. time} & \multicolumn{1}{l|}{Avg. Efficiency} \\ \hline \hline
p2p-Gnutella08 & 6301 & 6.59483 & 41554 & 0.305 & 98.2 \\ 
P2p-Gnutella09 & 8114 & 6.41188 & 52026 & 0.406 & 97.6 \\ 
P2p-Gnutella06 & 8717 & 7.23299 & 63050 & 0.490 & 97.8 \\ 
P2p-Gnutella05 & 8846 & 7.19851 & 63678 & 0.479 & 98 \\ 
P2p-Gnutella04 & 10876 & 7.35454 & 79988 & 0.604 & 98.2 \\ 
P2p-Gnutella25 & 22687 & 4.82259 & 109410 & 1.055 & 98.6 \\ 
P2p-Gnutella24 & 26518 & 4.93016 & 130738 & 1.198 & 98.4 \\ 
P2p-Gnutella30 & 36682 & 4.81588 & 176656 & 1.671 & 98.2 \\ 
P2p-Gnutella31 & 62586 & 4.72604 & 295784 & 2.795 & 97.2 \\ \hline
\end{tabular}}
\label{tabto}
\end{table*}

The plots on the result data are shown in Figure~\ref{tordering}. x-axis represents either the number of nodes or the number of edges and y-axis denotes the average time taken for ordering two nodes based on their betweenness score. It is clear from the plots that the average betweenness-ordering time is linear with the number of edges. Here ordering time is also linear with the number of edges. It is due to the constant and small average degree of nodes in all of the networks in considered network-family which made number of edges as linear to the number of nodes. These plots are the corroborating experimental evidence of the heuristic's (BOLT's) linear running time on real-world graphs. This result also concludes that scaling of graphs in the  terms of number of edges will linearly affect the ordering time and thus our ordering algorithm will be very quick and accurate than any other possible algorithms. 

\begin{figure}[H]
\centering
\subfigure[Number of nodes Vs Ordering time]{%
\includegraphics[width=.45\textwidth]{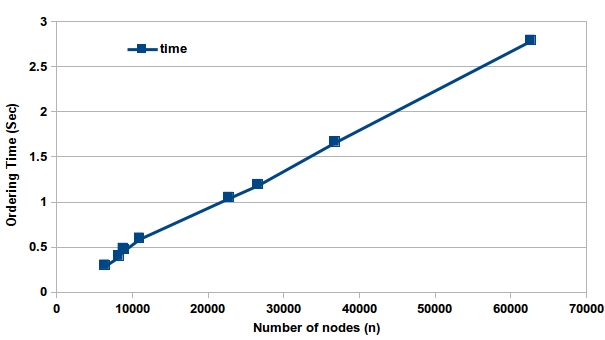}
\label{cor_syn}}
\subfigure[Number of edges Vs Ordering time]{%
\includegraphics[width=.45\textwidth]{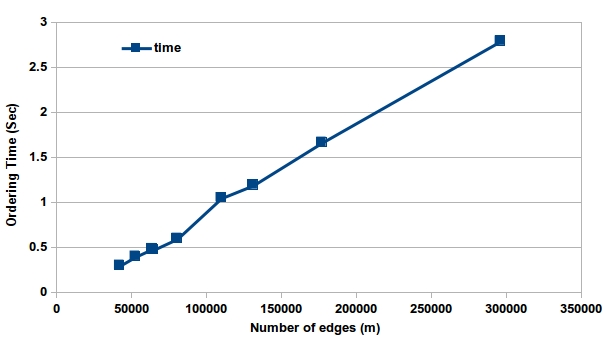}
\label{cor_real}}
\caption{Average betweenness-ordering time Vs size of the graph in terms of number of node  and number of edges on Gnutella-family of networks}
\label{tordering}
\end{figure}

\subsection{Average Error and Efficiency vs Number of Sampled Nodes ($T$)\label{samples}}
In each of the Figure~\ref{iter-error} and Figure~\ref{iter-order}, there are two plots. The first plot is for the change in the average error vs $T$ and the second plot is for the change in the efficiency vs $T$ on considered synthetic graphs of size 10000 that are mentioned in Table \ref{tab_synthetic}.

\begin{figure}[htb]
\centering
\subfigure[Average Error]{%
\includegraphics[width=.48\columnwidth]{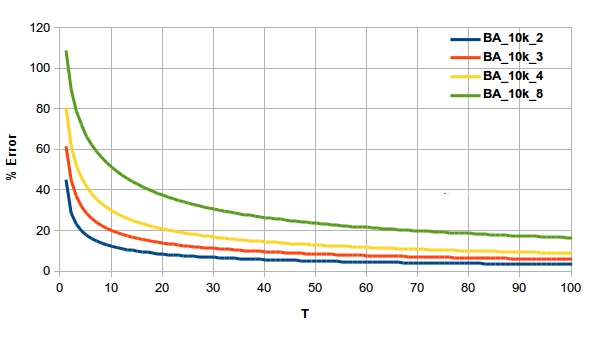}
\label{iter-error-ba}}
\subfigure[Average Efficiency]{%
\includegraphics[width=.48\columnwidth]{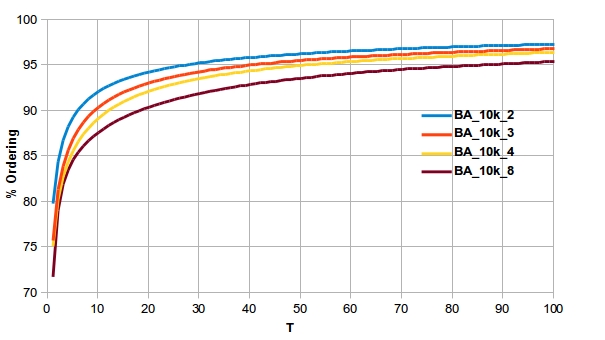}
\label{iter-error-er}}
\caption{Iterative performance of our approach on considered BA\_10k type synthetic graphs. }
\label{iter-error}
\end{figure} 

\begin{figure}[htb]
\centering
\subfigure[Average Error]{%
\includegraphics[width=.48\columnwidth]{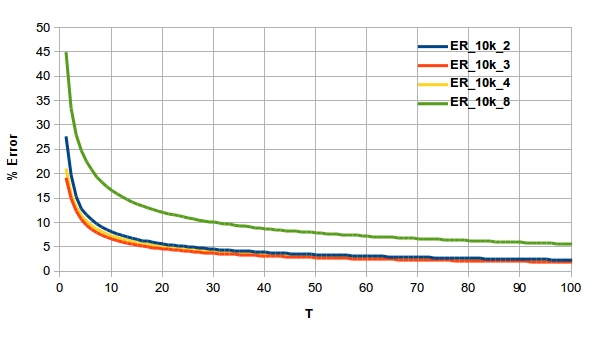}
\label{iter-order-er}}
\subfigure[Average Efficiency]{%
\includegraphics[width=.48\columnwidth]{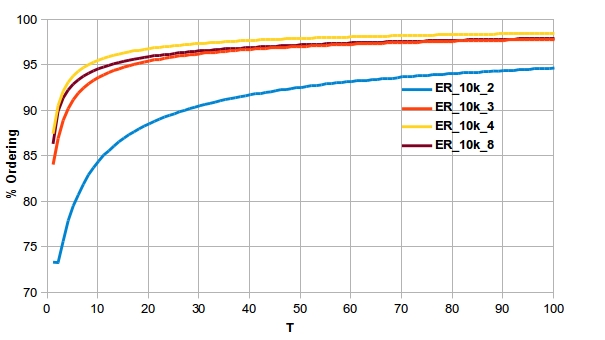}
\label{iter-order-er}}
\caption{Iterative performance of our approach on considered ER\_10k type synthetic graphs.}
\label{iter-order}
\end{figure}

\subsection{Average Relaxed efficiency ($\xi^t$) vs $t$ \label{relaxed_ef}}
In this section we will inspect the plot between the relaxed efficiency $\xi^t$ and $t$ on some synthetic networks of \textbf{1k} nodes. Similar results are achieved on all considered networks, but due to page limit, we skip the plots for other networks. The relaxed efficiency is computed by the formula given in section \ref{eff}. We vary $t$ for $t$=2,3,5,10. The plots are compiled in Figure~\ref{re}. For each $t$, we generate 5 synthetic networks and then calculate the relaxed efficiency on each graph and average them to get average relaxed efficiency. We plotted average relaxed efficiency for both type of considered synthetic graphs (ER and BA). 
BO, 2-E, 3-E, 5-E, 10-E are the labels assumed for average relaxed efficiency at $t$=0,2,3,5,10 respectively. The plots demonstrate that the efficiency increases with an increase in $t$, i.e., in real-world situation where relaxation is allowed in ordering, BOLT will perform much better than its usual performance. The relaxation in ordering means that betweenness-ordering of the nodes with approximately same betweenness rank is ignored and only the betweenness-ordering of nodes with difference greater than the threshold value (t) in their betweenness ranks are considered. Thus, we can say that BOLT results ordering very close to the exact betweenness-ordering.
\begin{figure}[htb]
\centering
\includegraphics[width=.7\columnwidth]{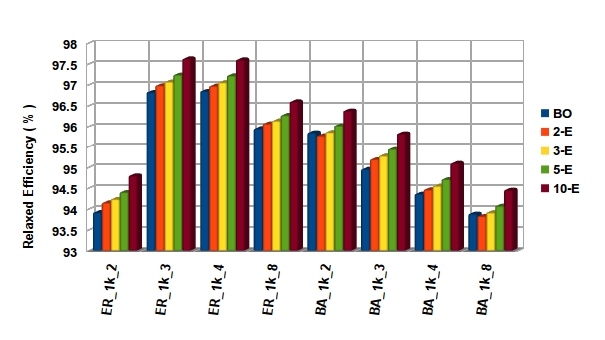}
\caption{Average relaxed efficiency for different values of $t$ in some synthetic networks }
\label{re}
\end{figure}

% if you will not have a photo at all:
%\begin{IEEEbiographynophoto}{Shubham Chaudhary}
%Biography text here.
%\end{IEEEbiographynophoto}

% insert where needed to balance the two columns on the last page with
% biographies
%\newpage

%\begin{IEEEbiographynophoto}{S.R.S. Iyengar}
%Biography text here.
%\end{IEEEbiographynophoto}

% You can push biographies down or up by placing
% a \vfill before or after them. The appropriate
% use of \vfill depends on what kind of text is
% on the last page and whether or not the columns
% are being equalized.

%\vfill

% Can be used to pull up biographies so that the bottom of the last one
% is flush with the other column.
%\enlargethispage{-5in}
% that's all folks
\end{document}